\documentclass[journal,10pt,twocolumn,twoside]{IEEEtran} 
\IEEEoverridecommandlockouts
\usepackage{amsmath}
\usepackage{amsfonts}
\usepackage{cases}
\usepackage{setspace}
\usepackage{fancybox}
\usepackage{subfigure}
\usepackage{epsfig}
\usepackage{graphicx}
\usepackage{epstopdf}
\usepackage{float}
\usepackage{multirow}
\usepackage{color}%
\usepackage{amsmath}
\usepackage{multirow}

\usepackage{indentfirst}
\usepackage{dsfont}
\usepackage{amsfonts}
\usepackage{times,amsmath,color,amssymb,epsfig,cite,subfigure,algorithm,algorithmic}

\newenvironment{proof}[1][Proof]{\begin{trivlist}
\item[\hskip \labelsep {\bfseries #1}]}{\end{trivlist}}

\newcommand{\qed}{\nobreak \ifvmode \relax \else
      \ifdim\lastskip<1.5em \hskip-\lastskip
      \hskip1.5em plus0em minus0.5em \fi \nobreak
      \vrule height0.40em width0.6em depth0.25em\fi}

\newtheorem{lemma}{Lemma}
\newtheorem{theorem}{Theorem}

\begin{document}
\title{Designing Unimodular Sequences with Optimized Auto/cross-correlation properties via Consensus-ADMM/PDMM Approaches}
%
%
%
\author{~Yongchao Wang,~
        ~Jiangtao Wang
        }

%

\maketitle

\begin{abstract}
  Unimodular sequences with good auto/cross-correlation properties are favorable in wireless communication and radar applications.
  In this paper, we focus on designing these kinds of sequences. The main content is as follows: first, we formulate the designing problem as a quartic polynomial minimization problem with constant modulus constraints;
  second, by introducing auxiliary phase variables, the polynomial minimization problem is equivalent to a consensus nonconvex optimization problem; third, to achieve its good approximate solution efficiently, we propose two efficient algorithms based on alternating direction method of multipliers (ADMM) and parallel direction method of multipliers (PDMM);
  fourth, we prove that the consensus-ADMM algorithm can converge to some stationary point of the original nonconvex problem and consensus-PDMM's output is some stationary point of the original nonconvex problem if it is convergent. Moreover, we also analyze the nonconvex optimization model's local optimality and computational complexity of the proposed consensus-ADMM/PDMM approaches.
  Simulation results demonstrate that the proposed ADMM/PDMM approaches outperform state-of-the-art ones in either computational cost or correlation properties of the designed unimodular sequences.
\end{abstract}

\begin{IEEEkeywords}
Unimodular sequence, auto/cross-correlation, consensus-ADMM/PDMM, convergence/complexity analysis.
\end{IEEEkeywords}

\IEEEpeerreviewmaketitle

\section{Introduction}
\IEEEPARstart{U}{\MakeLowercase n}imodular sequences with good auto/cross-correlation properties are very favorable in wireless communication and radar systems. The reasons are twofold: One is they can maximize the amplifier's power efficiency in the transmitter and the other is they can greatly improve the system's performance. For example, when the sequences have low autocorrelation sidelobes, they can improve target detection possibility  \cite{Stoica_07}\cite{Levanon_09}, facilitate synchronization \cite{Spasojevic_01}\cite{Hu_17}\cite{Tseng_00} as well as power control \cite{Schmidt_07}, etc. Moreover, when unimodular sequences have low cross-correlation sidelobe levels, they can be applied to clutter mitigation \cite{Bliss_03}, improving parameter identifiability \cite{Li_07} and distinguishing users \cite{Ponnaluri_07}. Therefore, many researchers are attracted to this field in designing unimodular sequences with good auto/cross-correlation properties.

At the early stage, many studies focused on the autocorrelation property of the considered unimodular sequences. In \cite{Mertens_96}, authors customized an exhaustive search algorithm to construct binary-phase sequences. In \cite{Wang_08}, authors proposed an iterated variable depth searching algorithm to obtain binary-phase sequences with good autocorrelation properties. Besides binary-phase sequences, designing polyphase (or continuous phase) sequences with low sidelobe levels are also investigated widely. In \cite{Borwein_05} and \cite{Nunn_09}, authors proposed two heuristic methods to design polyphase sequences respectively. However, both of them are not capable of designing long sequences due to their high computational complexities. Later, authors in \cite{Stoica_09} and \cite{Stoica_09_letter} proposed two iterative methods named cyclic algorithm-new (CAN) and periodic CAN (PeCAN) respectively to design unimodular aperiodic and periodic sequences. In \cite{Stoica_10}, authors developed a closed-form construction to obtain integrated sidelobe level (ISL) and peak sidelobe level (PSL) lower bounds under a power constraint. In \cite{Soltanalian_12}, authors introduced an algorithm frame framework based on an iterative twisted approximation to design unimodular sequences with a low periodic or aperiodic correlation and zero correlation zone property. In \cite{liang_16}, authors formulated the designing problem as a quartic minimization problem and then customized an alternating direction method of multipliers (ADMM) iteration algorithm to solve it approximately. In \cite{Song_15}--\hspace{-0.001cm}\cite{Zhao_17}, the authors applied the majorization-minimization (MM) technique to minimize autocorrelation sidelobe levels, which can guarantee that its objective function value decreases in every iteration. The authors in \cite{Arriaga_17} designed a strategy of minimizing the generalized weighted ISL measure to obtain the desired unimodular sequences.

In comparison with the above research topic, designing unimodular sequences with both low autocorrelation sidelobe levels and low cross-correlation levels is very challenging. In \cite{He_09}, authors proposed an approach named Weighted Cyclic Algorithms-New (WeCAN) method which can lower cross-correlation levels within certain lag intervals. In \cite{wang_12}, authors formulated the designing problem as a quartic polynomial minimization problem with constant modulus constraints, and then adopted a quasi-Newton solving algorithm to approximate the model's optimal solution. In \cite{Song_16}, authors applied the MM weighted correlation (MM-WeCorr) technique to design these unimodular sequences, which has faster convergence than the WeCAN approach. In \cite{Wu_18}, authors applied the MM technique to design a transmit waveform/receive filter for the MIMO radar with multiple waveform constraints. In \cite{Najafabadi_17}, authors focused on designing sequences with minimum PSL. They formulated the problem of PSL minimization based on Chebyshev distance and exploited the fast-randomized singular value decomposition technique to improve the performance of the proposed algorithm. In \cite{Kerahroodi_17}, authors considered both the continuous and discrete phase constraints and proposed a coordinate-descent method to design low sidelobe sequences. In \cite{Li_18}, authors formulated the ISL and weighted ISL minimization problems as quartic polynomial optimization models, and then simplified them into quadratic problems via the MM technique.

In this paper, we focus on designing unimodular sequences with optimized autocorrelation sidelobe levels and cross-correlation levels via  consensus-ADMM/PDMM approaches. First, the designing problem is formulated as a quartic polynomial minimization problem with constant modulus constraints. Then, we introduce auxiliary phase variables to the polynomial minimization problem and reformulate it as a consensus nonconvex optimization problem. Moreover, we propose two efficient solving algorithms, based on ADMM \cite{Goldstein_14}--\hspace{-0.001cm}\cite{Zhang_18} and parallel direction method of multipliers (PDMM) techniques \cite{ADMM_con}--\hspace{-0.001cm}\cite{Wang_14}, to efficiently achieve the problem's solution. Finally, we show several analyses on the proposed consensus-ADMM/PDMM algorithms, such as convergence, local optimality, and efficient implementations. Simulation results demonstrate the effectiveness of the proposed approaches.

The remaining sections of the paper are organized as follows. In Section II, the problem's formulation procedure is presented. Two solving algorithms named consensus-ADMM and consensus-PDMM as well as their performance analyses are presented in Section III and Section IV respectively. Finally, Section V presents some numerical results, and the conclusions are given in Section VI.

\emph{Notation}: Bold lowercase and uppercase letters denote column vectors and matrices and italics denote scalars. $\mathbb{R}$ and $\mathbb{C}$ denote the real field and complex field respectively. The superscripts $(\cdot)^*$, $(\cdot)^T$ and  $(\cdot)^H$ denote conjugate, transpose and conjugate transpose respectively.  $|\cdot|$ denotes the absolute value. The subscripts $\|\cdot\|_2$ and $\|\cdot\|_F$ denote Euclidean vector norm and Frobenius matrix norm. $\nabla(\cdot)$ represents the function's gradient. ${\rm Re}(\cdot)$ takes the real part of the complex variable and ${\rm Tr}(\cdot)$ denotes the trace of a matrix. ${\rm mat}(\cdot,N,M)$ reshapes a vector to an $N\times M$ matrix. $\langle \mathbf{x},\mathbf{y}\rangle$ denotes the dot product of $\mathbf{x}$ and $\mathbf{y}$. ${\rm vec}(\cdot)$ vectorizes a matrix by stacking its columns on top of one another.

\section{Problem Formulation}
\label{sec:format}

Consider a set of $M$ unimodular sequences $\{\mathbf{x}_m\}_{m=1}^M$ and the length of each sequence is $N$, i.e.,  $\mathbf{x}_m=[x_{1,m}, \cdots ,x_{N,m}]^T$ and $|x_{i,m}|=1$. The correlation function of sequences $\mathbf{x}_i$ and $\mathbf{x}_j$ at lag $n$ is defined as
\begin{equation}\label{correlation}
\begin{split}
r_{ijn}=&\sum_{k=n+1}^Nx_{k,i}^*x_{k-n,j}=\mathbf{x}_i^H\mathbf{S}_{n}\mathbf{x}_j,\\
&i,j=1, \cdots ,M; n=-N+1, \cdots ,N-1.
\end{split}
\end{equation}
Here, $\mathbf{S}_n$ is defined as an off-line diagonal 0-1 matrix. When $n>0$, nonzero elements located in the upper off-line of the matrix are shown in \eqref{Sn}.
\begin{equation}\label{Sn}
\begin{split}
 &\qquad\qquad \quad n~{\rm zeros}\\
&{\bf{S}}_n=
\left[
  \begin{array}{cccccc}
    ~ & \overbrace{0~ \cdots ~0} & 1 & ~ & \mathbf{\scalebox{1.5}0} & ~ \\
    ~ & ~ & ~ &  \ddots &~ & ~  \\
    ~ & ~ & ~ & ~ & 1 & ~ \\
    ~ & \mathbf{\scalebox{2.5}0} & ~ & ~ & ~ & ~  \\
  \end{array}
\right].
\end{split}
\end{equation}
When $n<0$, nonzero elements are located in the lower off-line of the matrix. Specifics, since ${\mathbf{S}}_n = {\mathbf{S}}_{-n}^H$, there exists $r_{ijn}=r_{ij-n}^*$.

For sequences $\{\mathbf{x}_m\}_{m=1}^M$, we define set $\mathcal{T}$ corresponding to the lag interval of interest. Then, the autocorrelation metric, called an integrated sidelobe level (ISL), can be written as
\begin{equation}\label{ISL}
{\rm ISL} = \sum_{i=1}^M\sum_{n\in\mathcal{T}\backslash 0}|r_{iin}|^2,
\end{equation}
and the cross-correlation metric, called a cross-correlation level (CCL), can be written as
\begin{equation}\label{CCL}
{\rm CCL} = \sum_{i=1}^M\sum_{\substack{j=1\\j\neq i}}^M\sum_{n\in\mathcal{T}}|r_{ijn}|^2.
\end{equation}
Moreover, we define the correlation matrix at lag $n$
 \[
\begin{split}
{\mathbf{R}}_n = &\left[
\begin{array}{cccc}
  r_{11n} & r_{12n} & \cdots & r_{1Mn} \\
  r_{21n} & r_{22n} & \cdots & r_{2Mn} \\
  \vdots & ~ & \ddots & \vdots \\
  r_{M1n} & \cdots & \cdots & r_{MMn}
\end{array}
\right].
\end{split}
\]
 Since the sequences $\{\mathbf{x}_m\}_{m=1}^M$ can be denoted by the $N$-by-$M$ matrix, i.e., $\mathbf{X}=[\mathbf{x}_1, \cdots ,\mathbf{x}_m]$, then $\mathbf{R}_n$ can be obtained through
 \begin{equation}\label{Rn_X}
{\mathbf{R}}_n = \mathbf{X}^H{\bf{S}}_n\mathbf{X}.
\end{equation}
Then, combining \eqref{ISL}, \eqref{CCL}, and \eqref{Rn_X}, we have
\begin{equation}\label{ISL_CCL}
{\rm ISL} + {\rm CCL}
  =
\|{\mathbf{X}}^H{\mathbf{X}}-N\mathbf{I}\|_F^2+\sum_{n\in\mathcal{T}} \|\mathbf{X}^H{\bf{S}}_n\mathbf{X}\|_F^2,
\end{equation}
where ${\mathbf{I}}$ is the identity matrix.

Thus, a compact optimization model for designing unimodular sequences with minimized ISL/CCL can be formulated as
\begin{subequations}\label{original model}
\begin{align}
&\min_{\mathbf{X} \in \mathbb{C}^{N \times M}}\ \ \ \|{\mathbf{X}}^H{\mathbf{X}}-N\mathbf{I}\|_F^2+\sum_{n\in \mathcal{T}}
\|\mathbf{X}^H{\mathbf{S}}_n\mathbf{X}\|_F^2, \label{original model a}\\
&{\rm subject\ to}\ |x_{i,m}|=1,\ i=1,\cdots,N, m=1,\cdots,M. \label{original model b}
\end{align}
\end{subequations}

Solving model \eqref{original model} directly is difficult since the objective function \eqref{original model a} is a fourth-order polynomial and the constraints are constant modulus equalities. However, since every element in $\mathbf{X}$ is a constant modulus, i.e., $x_{i,m}=e^{j\phi_{i,m}}$, we drop constant modulus constraints and formulate problem \eqref{original model} to the following minimization problem
\begin{equation}\label{unconstrained model}
\begin{split}
&\hspace{0.45cm} \min_{\mathbf{\Phi} } \hspace{0.65cm} \sum_{n\in \mathcal{T}} f_n(\mathbf{\Phi}),\\
& {\rm subject\ to} \hspace{0.3cm} 0\preceq\mathbf{\Phi}\prec2\pi.
\end{split}
\end{equation}
where
\begin{equation}\label{fn}
f_n(\mathbf{\Phi})=\|\mathbf{X}(\mathbf{\Phi})^H{\bf{S}}_n\mathbf{X}(\mathbf{\Phi})-N\mathbf{I}\delta_n\|_F^2,\ \ n\in \mathcal{T},
\end{equation}
 the constraint $0\preceq\mathbf{\Phi}\prec2\pi$ means all the elements $\phi_{i,m}$ in $\mathbf{\Phi}$ belong to $[0,2\pi)$, and $\delta_n$ in \eqref{fn} denotes the Dirac-$\delta$ function. Problem \eqref{unconstrained model} can further be equivalent to the following consensus-like problem \eqref{Consensus} by introducing a set of auxiliary variables $\{\mathbf{\Phi}_n, n\in \mathcal{T}\}$.
\begin{equation}\label{Consensus}
\begin{split}
&\hspace{0.25cm} \min_{\mathbf{\Phi},\{\mathbf{\Phi}_n\} } \hspace{0.4cm} \sum_{n\in \mathcal{T}} f_n(\mathbf{\Phi}_n)\\
& {\rm subject\ to} \hspace{0.3cm} \mathbf{\Phi}_n = \mathbf{\Phi},\ 0\preceq\mathbf{\Phi}\prec2\pi,\ n\in \mathcal{T}.
\end{split}
\end{equation}
In comparison with \eqref{unconstrained model}, the major benefit of the consensus-like problem \eqref{Consensus} is the flexibility of allowing $f_n(\mathbf{\Phi}_n)$ to handle its local variable independently. Sequentially, we will design two efficient algorithms, named consensus-ADMM and consensus-PDMM, to solve \eqref{Consensus} approximately, but efficiently. Moreover, we show several analyses on the proposed algorithms related to  convergence, local optimality, and efficient implementations.

\section{Customized Consensus-ADMM/PDMM Solving Algorithms}
\label{sec:pagestyle}

\subsection{Consensus-ADMM Algorithm Framework}

The augmented Lagrangian function of problem \eqref{Consensus} can be written as
\begin{equation}\label{AL ADMM}
\begin{split}
&\mathcal{L}(\mathbf{\Phi},\{\mathbf{\Phi}_n,\mathbf{\Lambda}_n,n\in \mathcal{T}\}) \\
=& \sum_{n\in \mathcal{T} } \left(f_n({\mathbf{\Phi}_n})
 + \langle\mathbf{\Lambda}_n,\mathbf{\Phi}_n-\mathbf{\Phi}\rangle+ \frac{\rho_n}{2}\|\mathbf{\Phi}_n-\mathbf{\Phi}\|_F^2\right),
\end{split}
\end{equation}
where $\mathbf{\Lambda}_n$ and $\rho_n$, $n\in \mathcal{T}$, are Lagrangian multipliers and penalty parameters respectively. We further define
\begin{equation}\label{Ln ADMM}
\begin{split}
&\mathcal{L}_n(\mathbf{\Phi},\mathbf{\Phi}_n,\mathbf{\Lambda}_n) \\
=& f_n({\mathbf{\Phi}_n})
 + \langle\mathbf{\Lambda}_n,\mathbf{\Phi}_n-\mathbf{\Phi}\rangle+ \frac{\rho_n}{2}\|\mathbf{\Phi}_n-\mathbf{\Phi}\|_F^2,
\end{split}
\end{equation}
where $n\in \mathcal{T}$.

Then, the consensus-ADMM algorithm framework for solving problem \eqref{Consensus} can be written as
\begin{subequations}\label{ori_ADMM}
\begin{align}
&\mathbf{\Phi}^{k+1} = \underset{0\preceq\mathbf{\Phi}\prec2\pi} {\arg \min}\ \ \mathcal{L}\left(\mathbf{\Phi},\{\mathbf{\Phi}_n^k,\mathbf{\Lambda}_n^k,n\in \mathcal{T}\}\right),\label{step1 ori ADMM}\\
&\mathbf{\Phi}_n^{k+1} =  \underset{\mathbf{\Phi}_n} {\arg \min}\ \ \mathcal{L}_n\left(\mathbf{\Phi}^{k+1},\mathbf{\Phi}_n,\mathbf{\Lambda}_n^k\right),n\in \mathcal{T},\label{step2 ori ADMM}\\
&\mathbf{\Lambda}_n^{k+1} = \mathbf{\Lambda}_n^{k} + \rho_n(\mathbf{\Phi}_n^{k+1} -\mathbf{\Phi}^{k+1}),n\in \mathcal{T}.\label{step3 ori ADMM}
\end{align}
\end{subequations}
where $k$ is iteration number.

{\it Remarks:} First, for different $n\in\mathcal{T}$, the variables in \eqref{step2 ori ADMM} and \eqref{step3 ori ADMM} are independent of each other. It means that the $|\mathcal{T}|$ paired problems \eqref{step2 ori ADMM} and \eqref{step3 ori ADMM} can be implemented in parallel, where $|\mathcal{T}|$ is set $\mathcal{T}$'s size.
Second, the main difficulty of implementing the consensus-ADMM algorithm \eqref{ori_ADMM} lies in how to solve problem \eqref{step2 ori ADMM} since functions $\mathcal{L}_n\left(\mathbf{\Phi}^{k}, \mathbf{\Phi}_n,\mathbf{\Lambda}_n^k\right)$ are nonconvex related to variables $\mathbf{\Phi}_n$. However, the following lemma indicates that $\{f_n(\mathbf{\Phi}), \ n\in \mathcal{T}\}$ are continuous, differentiable with respect to the phase variable and have Lipschitz continuous gradients (see detailed proof in Appendix A).

\begin{lemma}\label{Lipschtiz continuous}
  Gradients $\{\nabla f_n(\mathbf{\Phi}), n\in\mathcal{T}\}$ are Lipschitz continuous with constants $L_n$, i.e.,
\begin{equation}\label{Lipschitiz}
\|\nabla f_n({\mathbf{\Phi}})-\nabla f_n({\mathbf{\hat{\Phi}}})\|_F\leq L_n\|{\mathbf{\Phi}}-{\mathbf{\hat{\Phi}}}\|_F,\ n\in \mathcal{T},
\end{equation}
where
\begin{equation}\label{Ln}
   L_n > 4(M-1)(N+1).
\end{equation}
\end{lemma}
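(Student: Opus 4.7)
The plan is to derive $\nabla f_n(\mathbf{\Phi})$ in closed form and then bound the increment of the gradient using the fact that every entry of $\mathbf{X}(\mathbf{\Phi})$ has unit modulus. I would begin by writing $f_n(\mathbf{\Phi}) = \|\mathbf{A}_n\|_F^2$ with $\mathbf{A}_n = \mathbf{X}^H\mathbf{S}_n\mathbf{X} - N\mathbf{I}\delta_n$ and $x_{k,l} = e^{j\phi_{k,l}}$. Using Wirtinger calculus, the complex gradient satisfies $\partial f_n/\partial \mathbf{X}^* = 2\mathbf{S}_n\mathbf{X}\mathbf{A}_n^H + 2\mathbf{S}_n^H\mathbf{X}\mathbf{A}_n$, and composing with $\partial x_{k,l}/\partial \phi_{k,l}=jx_{k,l}$ produces an explicit formula for $\partial f_n/\partial \phi_{k,l}$ as the imaginary part of a product of entries that are each bounded by a constant (coming from unit-modulus entries of $\mathbf{X}$ and $0$--$1$ entries of $\mathbf{S}_n$).

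The second step is to estimate $\|\nabla f_n(\mathbf{\Phi}) - \nabla f_n(\hat{\mathbf{\Phi}})\|_F$ by a telescoping argument on the cubic expression for $\nabla f_n$. Writing $\nabla f_n$ schematically as sums of products of the form $\mathbf{S}_n\mathbf{X}\mathbf{X}^H\mathbf{S}_n^H\mathbf{X}$ (and similar), I would replace one factor of $\mathbf{X}$ or $\mathbf{X}^H$ at a time by its counterpart evaluated at $\hat{\mathbf{\Phi}}$ and bound each resulting difference using the elementary estimate $|e^{j\phi}-e^{j\hat\phi}| \le |\phi-\hat\phi|$, so that $\|\mathbf{X}(\mathbf{\Phi}) - \mathbf{X}(\hat{\mathbf{\Phi}})\|_F \le \|\mathbf{\Phi} - \hat{\mathbf{\Phi}}\|_F$. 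The remaining factors are unit-modulus matrices or the sparse $\mathbf{S}_n$, so each telescoped summand contributes at most a combinatorial constant in Frobenius norm, and the triangle inequality (plus Cauchy--Schwarz to recombine) yields a bound of the desired form.

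The explicit constant $L_n > 4(M-1)(N+1)$ then comes from tracking the multiplicities: the factor $4$ reflects the four cubic summands arising when the product rule is applied to the quartic $f_n$; the factor $M-1$ counts the cross-correlation contributions of the $M-1$ other sequences interacting with a given column of $\mathbf{X}$ through $\mathbf{A}_n$; and the factor $N+1$ is a uniform upper bound on the number of nonzero indices entering a correlation sum $r_{ijn} = \sum_k x_{k,i}^*x_{k-n,j}$ (which has at most $N-|n|\le N$ terms) combined with the additional $N\mathbf{I}\delta_n$ contribution.

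The main obstacle will be the bookkeeping in this last step. Because $\mathbf{A}_n$ itself depends on $\mathbf{\Phi}$, the gradient is already cubic in $\mathbf{X}$, so each of the telescoped increments contains several nested summations over lags and sequence indices; one must carefully separate the contributions of $\mathbf{S}_n\mathbf{X}\mathbf{A}_n^H$ and $\mathbf{S}_n^H\mathbf{X}\mathbf{A}_n$, exploit the sparsity of $\mathbf{S}_n$, and aggregate the entrywise bounds into a Frobenius-norm bound tight enough to produce precisely $4(M-1)(N+1)$ rather than a looser polynomial in $M$ and $N$.
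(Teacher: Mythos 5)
Your route is genuinely different from the paper's. The paper does not telescope the cubic gradient expression at all: it writes the difference quotient $\bigl(\partial f_n(\mathbf{\Phi})/\partial\phi_{i,m}-\partial f_n(\hat{\mathbf{\Phi}})/\partial\hat\phi_{i,m}\bigr)/(\phi_{i,m}-\hat\phi_{i,m})$, invokes the one-dimensional mean value theorem to replace it by a second partial derivative $\partial^2 f_n/\partial\bar\phi_{i,m}^2$ at an intermediate point, and then bounds that second derivative directly. The constant falls out of a purely local count: $\partial(\mathbf{X}^H\mathbf{S}_n\mathbf{X})/\partial\phi_{i,m}$ has exactly $2(M-1)$ nonzero entries, each of unit modulus, and each entry of $\mathbf{v}_n-\mathbf{c}$ has modulus at most $N$, so $|\partial^2 f_n/\partial\bar\phi_{i,m}^2|\le 2\cdot 2(M-1)+2\cdot 2(M-1)N=4(M-1)(N+1)$. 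What the second-derivative route buys is precisely that the constant is read off from one sparse vector rather than reassembled from a sum of perturbed cubic products; what your route buys is that it avoids the mean-value-theorem step entirely and works with global increments, which is arguably more robust.

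The concrete gap is in your third and fourth paragraphs: the value $4(M-1)(N+1)$ is asserted via a multiplicity heuristic but never derived, and the telescoping scheme as described is unlikely to land on it. Two specific difficulties: (i) each nonzero entry of $\partial\mathbf{v}_n/\partial\phi_{i,m}$ has the form $\pm je^{j(\phi_a-\phi_b)}$, so its increment between $\mathbf{\Phi}$ and $\hat{\mathbf{\Phi}}$ is controlled by $|\phi_a-\hat\phi_a|+|\phi_b-\hat\phi_b|$, i.e.\ by \emph{two} coordinates; when you square, sum over the $NM$ gradient entries, and recombine with Cauchy--Schwarz to recover $\|\mathbf{\Phi}-\hat{\mathbf{\Phi}}\|_F$, you pick up extra multiplicative factors that your accounting does not absorb; (ii) bounding the unperturbed factors by operator or Frobenius norms of $\mathbf{X}$ and $\mathbf{A}_n$ (rather than by the entrywise sparsity count $2(M-1)$ and the entrywise bound $N$) gives $O(MN)$ but with a different, generally larger leading coefficient. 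A looser constant still proves Lipschitz continuity, but the lemma's content is that the specific value $4(M-1)(N+1)$ suffices, and that value feeds directly into the convergence condition $\rho_n\ge 9L_n$ of Theorems 1--2. To close the gap you would have to carry out the entrywise bookkeeping you defer to the end, at which point you are essentially reconstructing the paper's count of the $2(M-1)$ unit-modulus nonzeros of $\partial(\mathbf{X}^H\mathbf{S}_n\mathbf{X})/\partial\phi_{i,m}$ and the bound $N$ on the entries of $\mathbf{v}_n-\mathbf{c}$.
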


Based on Lemma \ref{Lipschtiz continuous}, we have the following inequality
\begin{equation}\label{upperbound Un}
\begin{split}
&\mathcal{L}_n\left(\mathbf{\Phi}^{k}, \mathbf{\Phi}_n,\mathbf{\Lambda}_n^k\right)\leq f_n({\mathbf{\Phi}}^{k})+\langle \nabla f_n({\mathbf{\Phi}}^{k}),{\mathbf{\Phi}_n}-{\mathbf{\Phi}}^{k}\rangle \\
&\hspace{1.2cm}+ \langle\mathbf{\Lambda}_n^k,\mathbf{\Phi}_n-\mathbf{\Phi}^{k}\rangle+ \frac{\rho_n+L_n}{2}\|\mathbf{\Phi}_n-\mathbf{\Phi}^{k}\|_F^2.
\end{split}
\end{equation}
Let right hand side of inequality \eqref{upperbound Un} be $\mathcal{U}_n\left(\mathbf{\Phi}^{k+1},\mathbf{\Phi}_n,\mathbf{\Lambda}_n^k\right)$. Then, we customize the following consensus-ADMM solving algorithm
\begin{subequations}\label{ADMM}
\begin{align}
&\mathbf{\Phi}^{k+1} = \underset{0\preceq\mathbf{\Phi}\prec2\pi} {\arg \min}\ \ \mathcal{L}\left(\mathbf{\Phi},\{\mathbf{\Phi}_n^k,\mathbf{\Lambda}_n^k,n\in \mathcal{T}\}\right),\label{step1 ADMM}\\
&\mathbf{\Phi}_n^{k+1} = \underset{\mathbf{\Phi}_n} {\arg \min}\  \ \mathcal{U}_n\left(\mathbf{\Phi}^{k+1},\mathbf{\Phi}_n,\mathbf{\Lambda}_n^k\right), n\in \mathcal{T},\label{step2 ADMM}\\
&\mathbf{\Lambda}_n^{k+1} = \mathbf{\Lambda}_n^{k} + \rho_n(\mathbf{\Phi}_n^{k+1} -\mathbf{\Phi}^{k+1}), n\in \mathcal{T}.\label{step3 ADMM}
\end{align}
\end{subequations}

Since $\mathcal{L}\left(\mathbf{\Phi},\{\mathbf{\Phi}_n^k,\mathbf{\Lambda}_n^k,n\in \mathcal{T}\}\right)$ and $\mathcal{U}_n(\mathbf{\Phi}^{k+1}, \mathbf{\Phi}_n,\mathbf{\Lambda}_n^k)$ are strongly convex quadratic functions with respect to $\mathbf{\Phi}$ and $\mathbf{\Phi}_n$, optimal solutions of problems \eqref{step1 ADMM} and \eqref{step2 ADMM} can be obtained by solving linear equations \eqref{linear equations a} and \eqref{linear equations b} respectively.
\begin{subequations}\label{linear equations}
\begin{align}
 &\nabla_{\mathbf{\Phi}}\mathcal{L}\left(\mathbf{\Phi},\{\mathbf{\Phi}_n^k,\mathbf{\Lambda}_n^k,n\in \mathcal{T}\}\right)=0,\label{linear equations a} \\ &\nabla_{\mathbf{\Phi}_n}\mathcal{U}_n\left(\mathbf{\Phi}^{k+1},\mathbf{\Phi}_n,\mathbf{\Lambda}_n^k\right)=0.\label{linear equations b}
\end{align}
\end{subequations}
Then, we project the solutions onto the feasible region and obtain
\begin{subequations}\label{solutions}
\begin{align}
&\mathbf\Phi^{k+1}\!=\!\underset{[0,2\pi)}{\Pi} \left(\frac{1}{|\mathcal{T}|}\sum\limits_{n\in \mathcal{T}}\left(\mathbf{\Phi}_n^k+\frac{\mathbf{\Lambda}_n^k}{\rho_n}\right)\right), \label{solution phi}\\
&\mathbf\Phi_n^{k+1} = \mathbf\Phi^{k+1} -\frac{\nabla f_n(\mathbf\Phi^{k+1})+\mathbf{\Lambda}_n^k}{\rho_n+L_n},n\in \mathcal{T}. \label{solution phi n}
\end{align}
\end{subequations}
Combining \eqref{step3 ADMM} and \eqref{solutions}, we summarize the customized consensus-ADMM algorithm in Table \ref{table ADMM}.

\begin{table}[tbp]
\renewcommand \arraystretch{1.2}
\caption{The customized consensus-ADMM algorithm }
\label{table ADMM}
\centering
\begin{tabular}{l}
 \hline\hline
 {\bf Initialization:} Compute Lipschitz constants $\{L_n,\!n\!\in\!\mathcal{{T}}\}$  \\
 \hspace{0.2cm} according to \eqref{Ln}.
  Set iteration index $k\!=\!1$, initialize \\
 \hspace{0.2cm} $\mathbf{\Phi}^1$ and $\{\mathbf\Lambda_n^1, n\in\mathcal{T}\}$ randomly, and let $\{\mathbf{\Phi}^1= \mathbf{\Phi}_n^1,$ \\
 \hspace{0.2cm} $ n\in \mathcal{T}\}$. \\
  {\bf repeat} \\
  \hspace{0.2cm} S.1 Compute $\mathbf\Phi^{k+1}$ via \eqref{solution phi}, i.e., \\
  \hspace{0.9cm} $\mathbf\Phi^{k+1}\!=\!\underset{[0,2\pi)}{\Pi} \left(\frac{1}{|\mathcal{T}|}\sum\limits_{n\in \mathcal{T}}\left(\mathbf{\Phi}_n^k+\frac{\mathbf{\Lambda}_n^k}{\rho_n}\right)\right)$.\\
  \hspace{0.2cm} S.2 Compute $\{\mathbf\Phi_n^{k+1}, n\in \mathcal{{T}}\}$ via \eqref{solution phi n} in parallel, i.e.,\\
  \hspace{0.9cm}
  $
   \mathbf{\Phi}_n^{k+1} = \mathbf\Phi^{k+1} -\frac{\nabla f_n(\mathbf\Phi^{k+1})+\mathbf{\Lambda}_n^k}{\rho_n+L_n}.
  $ \\
  \hspace{0.2cm} S.3 Compute $\{\mathbf{\Lambda}_n^{k+1}, n\in \mathcal{{T}}\}$ via \eqref{step3 ADMM} in parallel, i.e., \\
  \hspace{0.9cm} $\mathbf{\Lambda}_n^{k+1}\! = \!\mathbf{\Lambda}_n^{k} + \rho_n(\mathbf{\Phi}_n^{k+1} -\mathbf{\Phi}^{k+1})$.\\
 {\bf until} some preset termination criterion is satisfied.\\
 \hspace{0.75cm} Let $\mathbf{\Phi}^{k+1}$ be the output.\\
 \hline\hline
\end{tabular}
\end{table}

\subsection{Consensus-PDMM Algorithm Framework}
In this subsection, we develop a consensus-PDMM algorithm with a full parallel implementation structure to solve problem \eqref{original model}. In it, the updated process during one iteration can be executed in one phase, which could provide a flexible asynchronous updated manner that is more suitable for some real applications.

\begin{table}[tbp]
\renewcommand \arraystretch{1.2}
\caption{The proposed consensus-PDMM algorithm }
\label{table PDMM}
\centering
\begin{tabular}{l}
 \hline\hline
 {\bf Initialization:} Set $M$ and $N$. Compute $L_n$ according to \\ \hspace{0.2cm} \eqref{Ln}. Set iteration index $k=1$, choose $\mathbf{\Phi}^1$ and $\mathbf\Lambda_n^1$ \\
 \hspace{0.2cm} randomly and let $\{\mathbf{\Phi}^1 = \mathbf{\Phi}_n^1, n\in \mathcal{T}\backslash 0\}$. \\
  {\bf repeat} \\
  \hspace{0.2cm} S.1 Compute $\mathbf\Phi^{k+1}$ via \eqref{solu phi}, i.e., \\
  \hspace{0.9cm} $\mathbf{{\Phi}}^{k+1} \!=\!\underset{[0,2\pi)}{\Pi}\!\left(\frac{L_0\mathbf{\Phi}^k-\nabla f_0(\mathbf{\Phi}^k)+\!\!\sum\limits_{n\in \mathcal{T}\backslash 0}\!\!\!\left(\mathbf{\Lambda}_n^k\!+\!\rho_n\mathbf{\Phi}_n^k\right)}{L_0+\sum\limits_{n\in \mathcal{T}\backslash 0}\rho_n}\right)$.\\
  \hspace{0.2cm} S.2 Compute $\{\mathbf\Phi_n^{k+1}\!, n\in \mathcal{{T}}\backslash 0\}$ via \eqref{solu phi n} in parallel, i.e.,\\
  \hspace{0.9cm} $\displaystyle{
   \mathbf{{\Phi}}_n^{k+1} \!=\!\frac{L_n\mathbf{\Phi}_n^{k}+\rho_n \mathbf{\Phi}^{k}-\mathbf{\Lambda}_n^k-\nabla f_n(\mathbf{\Phi}_n^{k})}{L_n+\rho_n}. }
  $   \\
  \hspace{0.2cm} S.3 Compute $\{\mathbf{\Lambda}_n^{k+1}\!, n\in \mathcal{{T}}\backslash 0\}$ via \eqref{step3 PDMM} in parallel, i.e., \\
  \hspace{0.9cm} $\mathbf{\Lambda}_n^{k+1}\! = \!\mathbf{\Lambda}_n^{k} + \rho_n(\mathbf{\Phi}_n^{k+1} -\mathbf{\Phi}^{k})$.\\
 {\bf until} some preset termination criterion is satisfied.\\
 \hspace{0.75cm} Let $\mathbf{\Phi}^{k+1}$ be the output.\\
 \hline\hline
\end{tabular}
\end{table}

Specifically, consensus problem \eqref{Consensus} can be equivalent to
\begin{equation}\label{Consensus PDMM}
\begin{split}
&\hspace{0.3cm}\min_{\mathbf{\Phi},\left\{\mathbf{\Phi}_n\right\} } \hspace{0.4cm} f_0(\mathbf{\Phi})+\sum_{n\in \mathcal{T}\backslash 0} f_n(\mathbf{\Phi}_n) \\
& {\rm subject\ to}\hspace{0.3cm}0\preceq\mathbf{\Phi}\prec2\pi,\mathbf{\Phi}_n = \mathbf{\Phi},\ \ n\in \mathcal{T}\backslash 0,
\end{split}
\end{equation}
Its augmented Lagrangian function can also be written as (see \eqref{Ln ADMM})
\begin{equation}\label{AL PDMM}
\begin{split}
\hspace{-0.3cm}\mathcal{L}\!\left(\!\mathbf{\Phi},\!\{\mathbf{\Phi}_n,\mathbf{\Lambda}_n,\!n\!\in \!\mathcal{T}\backslash 0\}\!\right)\! =\! f_0(\mathbf{\Phi}) \!+\!\!\!
 \!\sum_{n\in \mathcal{T}\backslash 0}\!\!  \mathcal{L}_n\!\left(\mathbf{\Phi},\mathbf{\Phi}_n,\mathbf{\Lambda}_n\right).
\end{split}
\end{equation}
Then, the proposed consensus-PDMM algorithm\footnotemark can be described as
\footnotetext{Here, we should note that the proposed consensus-PDMM algorithm is different to the parallel methods in\cite{Deng_14}\cite{Wang_14}, which focus on the minimization of block-separable convex functions subject to linear constraints.}
\begin{subequations}\label{PDMM_ori}
\begin{align}
&\mathbf{\Phi}^{k+1} = \underset{0\preceq\mathbf{\Phi}\prec2\pi} {\arg \min}\ \ \mathcal{L}\left(\mathbf{\Phi},\{\mathbf{\Phi}_n^{k},\mathbf{\Lambda}_n^{k},n\in \mathcal{T}\backslash 0\}\right), \label{step1 PDMM ori}\\
&\mathbf{\Phi}_n^{k+1} =  \underset{\mathbf{\Phi}_n} {\arg \min}\ \  \mathcal{L}_n\left(\mathbf{\Phi}^{k},\mathbf{\Phi}_n,\mathbf{\Lambda}_n^k\right), \ \  n\in \mathcal{T}\backslash 0, \label{step2 PDMM ori}\\
&\mathbf{\Lambda}_n^{k+1} = \mathbf{\Lambda}_n^{k} + \rho_n(\mathbf{\Phi}_n^{k+1} -\mathbf{\Phi}^{k}),\ \ n\in \mathcal{T}\backslash 0. \label{step3 PDMM ori}
\end{align}
\end{subequations}
One can see that $\mathbf\Phi^k$ (not $\mathbf\Phi^{k+1}$) and $\mathbf\Lambda_n^k$ are involved in solving \eqref{step2 PDMM ori}. This fact admits problems \eqref{step1 PDMM ori} and \eqref{step2 PDMM ori} can be solved in parallel.
According to Lemma \ref{Lipschtiz continuous}, we can obtain upper-bound functions $\mathcal{U}(\mathbf{\Phi}, \{\mathbf{\Phi}_n^k,\mathbf{\Lambda}_n^k,n\in\mathcal{T}\backslash 0\})$ and $\mathcal{U}_n\left(\mathbf{\Phi}^{k}, \mathbf{\Phi}_n,\mathbf{\Lambda}_n^k\right)$ of $\mathcal{L}\left(\mathbf{\Phi},\{\mathbf{\Phi}_n^{k},\mathbf{\Lambda}_n^{k},n\in \mathcal{T}\backslash 0\}\right)$ and $\mathcal{L}_n\left(\mathbf{\Phi}^{k},\mathbf{\Phi}_n,\mathbf{\Lambda}_n^k\right)$ respectively in the following
 \begin{equation}\label{upperbound U0}
\begin{split}
 & \hspace{-0.2cm} \mathcal{U}(\!\mathbf{\Phi}\!, \{\mathbf{\Phi}_n^k,\!\mathbf{\Lambda}_n^k,\! n\!\in\!\mathcal{T}\backslash \!0\}\!)
\!=\!f_0({\mathbf{\Phi}}^{k})\!+\! \langle \nabla f_0( {\mathbf{\Phi}}^{k}),\mathbf\Phi\!-\!\mathbf\Phi^k\rangle \\ &\hspace{1.5cm}+\frac{L}{2}\|{\mathbf{\Phi}}-{\mathbf{\Phi}}^{k}\|_F^2 + \sum_{n\in \mathcal{T}\backslash 0}\mathcal{L}_n(\mathbf{\Phi},\mathbf{\Phi}_n^k,\mathbf{\Lambda}_n^k).
\end{split}
\end{equation}
\begin{equation}\label{upp Un}
\begin{split}
&\hspace{-0.3cm}\mathcal{U}_n\left(\mathbf{\Phi}^{k}, \mathbf{\Phi}_n,\mathbf{\Lambda}_n^k\right)\!=\! f_n({\mathbf{\Phi}}_n^{k})\!+\!\langle \nabla f_n({\mathbf{\Phi}}_n^{k}),{\mathbf{\Phi}_n}\!-\!{\mathbf{\Phi}}_n^{k}\rangle \\
&\hspace{-0.3cm}+\frac{L_n}{2}\|\mathbf{\Phi}_n\!-\!\mathbf{\Phi}_n^{k}\|_F^2\!+\! \langle\mathbf{\Lambda}_n^k,\mathbf{\Phi}_n\!-\!\mathbf{\Phi}^{k}\rangle+\!\frac{\rho_n}{2}\|\mathbf{\Phi}_n\!-\!\mathbf{\Phi}^{k}\|_F^2.
\end{split}
\end{equation}
Then, instead of minimizing nonconvex functions $\mathcal{L}\left(\mathbf{\Phi},\{\mathbf{\Phi}_n^{k},\mathbf{\Lambda}_n^{k},n\in \mathcal{T}\backslash 0\}\right)$ and $\mathcal{L}_n\left(\mathbf{\Phi}^{k},\mathbf{\Phi}_n,\mathbf{\Lambda}_n^k\right)$ directly, \eqref{PDMM_ori} can be relaxed to
\begin{subequations}\label{PDMM_relax}
\begin{align}
&\mathbf{{\Phi}}^{k+1} = \underset{0\preceq\mathbf{\Phi}\prec2\pi}{\arg\min}\ \mathcal{U}\left(\mathbf{\Phi},\{\mathbf{\Phi}_n^k,\mathbf{\Lambda}_n^k,n\in \mathcal{T}\backslash 0\}\right),\label{step1 PDMM}\\
&\mathbf{{\Phi}}_n^{k+1} = \underset{\mathbf{\Phi}_n}{\arg\min}\ \mathcal{U}_n\left(\mathbf{\Phi}^{k},\mathbf{\Phi}_n,\mathbf{\Lambda}_n^k\right), n\in \mathcal{T}\backslash 0,\label{step2 PDMM}\\
&\mathbf{{\Lambda}}_n^{k+1} = \mathbf{\Lambda}_n^{k} + \rho_n(\mathbf{{\Phi}}_n^{k+1} -\mathbf{\Phi}^{k}), n\in \mathcal{T}\backslash 0.\label{step3 PDMM}
\end{align}
\end{subequations}
Since $\mathcal{U}(\mathbf{\Phi}, \{\mathbf{\Phi}_n^k,\mathbf{\Lambda}_n^k,n\in\mathcal{T}\backslash 0\})$ and $\mathcal{U}_n\left(\mathbf{\Phi}^{k}, \mathbf{\Phi}_n,\mathbf{\Lambda}_n^k\right)$ are strongly quadratic, optimal solutions of problems \eqref{step1 PDMM} and \eqref{step2 PDMM} can be obtained easily by setting their gradients to zero, solving the linear equations and projecting the solutions onto the corresponding feasible regions, which lead to
\begin{subequations}\label{solutions PDMM}
\begin{align}
&\mathbf{{\Phi}}^{k+1} \!\!=\!\!\underset{[0,2\pi)}{\Pi}\!\!\left(\!\!\frac{L_0\mathbf{\Phi}^k\!\!-\!\!\nabla f_0(\mathbf{\Phi}^k)\!+\!\!\!\!\sum\limits_{n\in \mathcal{T}\backslash 0}\!\!\!\left(\mathbf{\Lambda}_n^k\!+\!\rho_n\mathbf{\Phi}_n^k\right)}{L_0+\sum\limits_{n\in \mathcal{T}\backslash 0}\rho_n}\!\right)\!,\label{solu phi}\\
&\mathbf{{\Phi}}_n^{k+1} \!=\!\frac{L_n\mathbf{\Phi}_n^{k}+\rho_n \mathbf{\Phi}^{k}-\mathbf{\Lambda}_n^k-\nabla f_n(\mathbf{\Phi}_n^{k})}{L_n+\rho_n}, n\in\mathcal{T}\backslash 0. \label{solu phi n}
\end{align}
\end{subequations}

In Table \ref{table PDMM}, we summarize the proposed consensus-PDMM algorithm.

\begin{figure*}
\begin{equation}\label{gradient f}
\begin{split}
\nabla f_n({\mathbf{\Phi}}) =
\left\{
\begin{array}{l}
  {\rm mat}\! \!\left(2{\rm Re}\left(\left(\frac{\partial {{\bf v}_0({\mathbf{\Phi}})}}{\partial \phi_{1,1}}\cdots\!\frac{\partial {{\bf v}_0}({\mathbf{\Phi}})}{\partial \phi_{N,M}}\right)^{H}\left({{\bf v}_0}({\mathbf{\Phi}}) - {\bf{c}}\right)\right),N,M\right), \ \ n = 0, \\
  {\rm mat} \left(2{\rm Re}\left(\left(\frac{\partial{\bf{v}}_n(\mathbf{\Phi})}{\partial \phi_{1,1}}\cdots\frac{\partial{\bf{v}}_n(\mathbf{\Phi})}{\partial \phi_{N,M}}\right)^{H}{\bf{v}}_n({\mathbf{\Phi}})\right), N,M\right),\hspace{0.8cm} n\in \mathcal{T}\backslash 0.
\end{array}
\right.
\end{split}
\end{equation}
\end{figure*}

\begin{figure*}[!htbp]
\normalsize
\begin{equation}\label{XX}
\begin{split}
\frac{\partial \mathbf{X(\Phi)}^H\mathbf{X(\Phi)}}{\phi_{i,m}}\!\!=\!\!
\left[
  \begin{array}{ccccccc}
    ~ &  ~& ~ & je^{j(\!\phi_{i,m}\!-\phi_{i,1}\!)} & ~ & ~ & ~ \\
    ~ &  \! \! \mathbf{\scalebox{3.5}0}\!\! & ~& \vdots & ~  & \!\! \mathbf{\scalebox{3.5}0}\!\!  & ~ \\
   ~  &~ &~ & je^{j(\!\phi_{i,m}\!-\phi_{i,m\!-\!1}\!)}&~ &~  \\
   \! \!-\!je^{j(\!\phi_{i,1}\!-\!\phi_{i,m}\!)} &\!\!\cdots\!\! &\!-\!je^{j(\phi_{i,m-1}\!-\phi_{i,m}\!)} & 0 & \!-\!je^{j(\!\phi_{i,m+1}\!-\phi_{i,m}\!)}&  \!\!\cdots\!\! & \!-\!je^{j(\!\phi_{i,M}\!-\phi_{i,m}\!)}\!\!  \\
    ~ & ~ &~ &je^{j(\phi_{i,m}\!-\phi_{i,m+1}\!)}  &~ & ~& ~  \\
    ~ &  \!\!\mathbf{\scalebox{3.5}0} \!\! & ~ & \vdots & ~ & \!\! \mathbf{\scalebox{3.5}0} \!\! & ~  \\
    ~ & ~ &~ & je^{j(\!\phi_{i,m}\!-\phi_{i,M}\!)} &~ & ~& ~  \\
  \end{array}
\right]
\end{split}
\end{equation}
\hrulefill
\vspace*{4pt}
\end{figure*}

\section{Performance Analysis}
\subsection{ Convergence Issue}
 Before showing the convergence theorem, we present Lemmas \ref{upperbound function}-\ref{lemma lower bound} and their proofs in Appendix B. Based on these lemmas, we show that if proper parameters are chosen, the augmented Lagrangian function $\mathcal{L}\left(\cdot\right)$  is {\it sufficient descent} in every iteration and is also lower-bounded, which leads $\mathcal{L}\left(\cdot\right)$ to convergence as $k\rightarrow+\infty$.

 Then, we have Theorem \ref{theorem admm} to show the convergence properties of the proposed consensus-ADMM algorithm (the proof is given in Appendix \ref{proof theorem 1}).
\begin{theorem}\label{theorem admm}
 Let $\left(\mathbf{\Phi}^k,\{\mathbf{\Phi}_n^k,\mathbf{\Lambda}_n^k,n\in\mathcal{{T}})\}\right)$ be the sequence generated by the proposed consensus-ADMM algorithms.
 If penalty parameters $\rho_n$ and Lipschitz constants $L_n$ satisfy $\rho_n\geq9L_n$, we have the following convergence results
\begin{equation}\label{conv variables}
\begin{split}
&\lim\limits_{k\rightarrow+\infty}\mathbf{\Phi}^{k}=\mathbf{\Phi}^{*},\ \ \lim\limits_{k\rightarrow+\infty}\mathbf{\Phi}_n^{k}=\mathbf{\Phi}_n^{*}, \\
&\lim\limits_{k\rightarrow+\infty}\mathbf{\Lambda}_n^{k}=\mathbf{\Lambda}_n^{*},\ \ \ \ \ \mathbf{\Phi}^{*}=\mathbf{\Phi}_n^{*}.
\end{split}
\end{equation}
Moreover, $\mathbf\Phi^*$ is a stationary point of problem \eqref{unconstrained model}, i.e., it satisfies the following inequality
\begin{equation}\label{stationary point}
\left\langle \sum_{n\in \mathcal{T}}\nabla f_n({\mathbf{\Phi}^{*}}),\mathbf{\Phi}-\mathbf{\Phi}^{*}\right\rangle\geq0,\ 0\preceq\mathbf{\Phi}\prec2\pi. \end{equation}

\end{theorem}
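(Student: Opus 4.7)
The plan is to follow the classical nonconvex--ADMM recipe: (i) establish a sufficient-descent inequality for the augmented Lagrangian $\mathcal{L}(\cdot)$ across one full iteration of \eqref{ADMM}; (ii) combine this with the lower bound from Lemma~\ref{lemma lower bound} to conclude that $\mathcal{L}^{k}$ converges and that successive iterate differences vanish; and (iii) pass to the limit in the first-order conditions of \eqref{step1 ADMM}--\eqref{step3 ADMM} to identify the accumulation point with a stationary point of \eqref{unconstrained model}.

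For (i), I would split $\mathcal{L}^{k+1}-\mathcal{L}^{k}$ into contributions from the three substeps. The $\mathbf{\Phi}$-subproblem \eqref{step1 ADMM} is strongly convex in $\mathbf{\Phi}$ with modulus $\sum_{n}\rho_n$, which gives a descent of at least $\tfrac12\bigl(\sum_{n}\rho_n\bigr)\|\mathbf{\Phi}^{k+1}-\mathbf{\Phi}^{k}\|_F^2$ (via the variational inequality for the constrained minimizer). The $\mathbf{\Phi}_n$-subproblem \eqref{step2 ADMM} minimizes the majorizer $\mathcal{U}_n$; combining Lemma~\ref{upperbound function} (the majorization $\mathcal{L}_n\leq\mathcal{U}_n$ from \eqref{upperbound Un}) with strong convexity of $\mathcal{U}_n$ in $\mathbf{\Phi}_n$ yields a descent proportional to $\|\mathbf{\Phi}_n^{k+1}-\mathbf{\Phi}_n^{k}\|_F^2$. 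Finally, the dual update \eqref{step3 ADMM} contributes an ascent equal to $\tfrac{1}{\rho_n}\|\mathbf{\Lambda}_n^{k+1}-\mathbf{\Lambda}_n^{k}\|_F^2$, which must be controlled by the primal decrements.

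That control is the heart of the argument. Using the first-order optimality of \eqref{step2 ADMM}, $(\rho_n+L_n)(\mathbf{\Phi}_n^{k+1}-\mathbf{\Phi}^{k+1})=-\nabla f_n(\mathbf{\Phi}^{k+1})-\mathbf{\Lambda}_n^{k}$, and substituting into \eqref{step3 ADMM}, one obtains the clean identity $\mathbf{\Lambda}_n^{k+1}=-\nabla f_n(\mathbf{\Phi}^{k+1})-L_n(\mathbf{\Phi}_n^{k+1}-\mathbf{\Phi}^{k+1})$. Subtracting the same identity at the previous iteration and invoking Lemma~\ref{Lipschtiz continuous} bounds $\|\mathbf{\Lambda}_n^{k+1}-\mathbf{\Lambda}_n^{k}\|_F$ by a linear combination of $\|\mathbf{\Phi}^{k+1}-\mathbf{\Phi}^{k}\|_F$ and $\|\mathbf{\Phi}_n^{k+1}-\mathbf{\Phi}_n^{k}\|_F$ with coefficients proportional to $L_n$. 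Squaring via $\|a+b\|_F^2\leq 2\|a\|_F^2+2\|b\|_F^2$, the choice $\rho_n\geq 9L_n$ is exactly the threshold that keeps the coefficients in
\begin{equation*}
\mathcal{L}^{k+1}-\mathcal{L}^{k}\leq -c_1\|\mathbf{\Phi}^{k+1}-\mathbf{\Phi}^{k}\|_F^2-c_2\!\!\sum_{n\in\mathcal{T}}\!\!\|\mathbf{\Phi}_n^{k+1}-\mathbf{\Phi}_n^{k}\|_F^2
\end{equation*}
strictly positive. Telescoping, combined with Lemma~\ref{lemma lower bound}, yields $\|\mathbf{\Phi}^{k+1}-\mathbf{\Phi}^{k}\|_F\to 0$ and $\|\mathbf{\Phi}_n^{k+1}-\mathbf{\Phi}_n^{k}\|_F\to 0$; the identity above then gives $\|\mathbf{\Lambda}_n^{k+1}-\mathbf{\Lambda}_n^{k}\|_F\to 0$ as well.

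For (iii), the feasible set $[0,2\pi)^{N\times M}$ is bounded, and the identity $\mathbf{\Lambda}_n^{k+1}=-\nabla f_n(\mathbf{\Phi}^{k+1})-L_n(\mathbf{\Phi}_n^{k+1}-\mathbf{\Phi}^{k+1})$ together with \eqref{solution phi n} makes $\{\mathbf{\Phi}_n^{k}\}$ and $\{\mathbf{\Lambda}_n^{k}\}$ bounded too, so Bolzano--Weierstrass provides an accumulation point $(\mathbf{\Phi}^{*},\{\mathbf{\Phi}_n^{*},\mathbf{\Lambda}_n^{*}\})$; the vanishing of successive differences upgrades this to convergence of the whole sequence, establishing \eqref{conv variables}. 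Passing to the limit in \eqref{step3 ADMM} gives $\mathbf{\Phi}_n^{*}=\mathbf{\Phi}^{*}$; in the optimality condition for \eqref{step2 ADMM} it gives $\nabla f_n(\mathbf{\Phi}^{*})+\mathbf{\Lambda}_n^{*}=0$; and the variational inequality for the constrained minimizer in \eqref{step1 ADMM} becomes $\langle -\sum_{n}\mathbf{\Lambda}_n^{*},\mathbf{\Phi}-\mathbf{\Phi}^{*}\rangle\geq 0$ for every feasible $\mathbf{\Phi}$, which on substituting $\mathbf{\Lambda}_n^{*}=-\nabla f_n(\mathbf{\Phi}^{*})$ is exactly \eqref{stationary point}. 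The main obstacle I expect is the dual-ascent bound in step (ii): the Lipschitz estimate couples $\|\mathbf{\Phi}^{k+1}-\mathbf{\Phi}^{k}\|_F$ and $\|\mathbf{\Phi}_n^{k+1}-\mathbf{\Phi}_n^{k}\|_F$, and balancing these mixed terms against the two separate primal descents is what quantitatively forces the threshold $\rho_n\geq 9 L_n$.
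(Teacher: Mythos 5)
Your proposal is correct and follows essentially the same route as the paper: the same three-way decomposition of the descent of $\mathcal{L}$ across the substeps, the same key identity $\mathbf{\Lambda}_n^{k+1}=-\nabla f_n(\mathbf{\Phi}^{k+1})-L_n(\mathbf{\Phi}_n^{k+1}-\mathbf{\Phi}^{k+1})$ to control the dual ascent by the primal decrements (the paper's Lemma 2), the same telescoping against the lower bound of Lemma 4, and the same limit passage in the variational inequality of the $\mathbf{\Phi}$-subproblem to obtain stationarity. The only cosmetic difference is that the paper derives the precise cubic conditions $\bar{c}_n,\tilde{c}_n\geq 0$ and solves them via Cardano's formula to find the actual threshold near $8.41L_n$, with $9L_n$ chosen merely for convenience rather than being the exact balance point you describe.
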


\begin{theorem}\label{theorem pdmm}
 In the consensus-PDMM algorithm, the penalty parameters $\rho_n$ and Lipschitz constants $L_n$  are set to satisfy $\rho_n\geq9L_n$. Let $\left(\mathbf{\Phi}^k,\{\mathbf{\Phi}_n^k,\mathbf{\Lambda}_n^k,n\in\mathcal{{T}})\}\right)$ be the sequence generated by the proposed consensus-PDMM algorithms. If \eqref{conv variables} holds, where $n\in\mathcal{T}\backslash 0$, then limit point $\mathbf\Phi^*$ is a stationary point of problem \eqref{unconstrained model}.
\end{theorem}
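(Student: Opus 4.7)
The plan is to pass to the limit $k\to\infty$ in each of the three PDMM updates \eqref{solu phi}--\eqref{step3 PDMM}, using the assumed convergence of $(\mathbf{\Phi}^{k},\mathbf{\Phi}_n^{k},\mathbf{\Lambda}_n^{k})$ together with the continuity of $\nabla f_n$ delivered by Lemma \ref{Lipschtiz continuous}, and then to assemble the three limiting identities into the variational inequality \eqref{stationary point}. Unlike Theorem \ref{theorem admm}, no sufficient-descent estimate has to be reproved, because convergence of the iterates is already part of the hypothesis.

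Taking limits in the multiplier update \eqref{step3 PDMM} forces $0=\rho_n(\mathbf{\Phi}_n^{*}-\mathbf{\Phi}^{*})$ and hence, since $\rho_n>0$, the consensus identity $\mathbf{\Phi}_n^{*}=\mathbf{\Phi}^{*}$ for every $n\in\mathcal{T}\backslash 0$. Next I would take limits in the $\mathbf{\Phi}_n$-update \eqref{solu phi n}. Rearranging it as
\[
(L_n+\rho_n)\mathbf{\Phi}_n^{k+1}-L_n\mathbf{\Phi}_n^{k}-\rho_n\mathbf{\Phi}^{k}=-\mathbf{\Lambda}_n^{k}-\nabla f_n(\mathbf{\Phi}_n^{k}),
\]
the left-hand side tends to $\rho_n(\mathbf{\Phi}^{*}-\mathbf{\Phi}^{*})=0$ by the consensus identity just obtained, so continuity of $\nabla f_n$ yields $\mathbf{\Lambda}_n^{*}=-\nabla f_n(\mathbf{\Phi}^{*})$ for all $n\in\mathcal{T}\backslash 0$.

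Finally I would invoke the first-order optimality condition of the constrained strongly convex quadratic minimization \eqref{step1 PDMM}: for every feasible $\mathbf{\Phi}$,
\[
\bigl\langle \nabla f_0(\mathbf{\Phi}^{k})+L_0(\mathbf{\Phi}^{k+1}-\mathbf{\Phi}^{k})+\!\!\sum_{n\in\mathcal{T}\backslash 0}\!\!\bigl[\rho_n(\mathbf{\Phi}^{k+1}-\mathbf{\Phi}_n^{k})-\mathbf{\Lambda}_n^{k}\bigr],\,\mathbf{\Phi}-\mathbf{\Phi}^{k+1}\bigr\rangle\geq 0.
\]
Passing to the limit, the $L_0$-term and every $\rho_n$-term vanish by convergence and consensus, and substituting $\mathbf{\Lambda}_n^{*}=-\nabla f_n(\mathbf{\Phi}^{*})$ collapses the remaining sum to $\nabla f_0(\mathbf{\Phi}^{*})+\sum_{n\in\mathcal{T}\backslash 0}\nabla f_n(\mathbf{\Phi}^{*})=\sum_{n\in\mathcal{T}}\nabla f_n(\mathbf{\Phi}^{*})$, which is exactly \eqref{stationary point}.

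The only delicate point is that the feasible box $[0,2\pi)^{N\times M}$ is half-open, so strictly speaking one should either work on its closure (identifying $0$ and $2\pi$ as the same phase, which is natural since they define the same unimodular entry $e^{j\phi}$) or appeal to continuity of the projection operator in \eqref{solu phi} together with continuity of $\nabla f_0$ to justify passing to the limit inside the variational inequality. Beyond this mild technicality, the argument is a routine continuity-plus-limit exercise, and no analogue of the sufficient-descent/lower-boundedness machinery needed for Theorem \ref{theorem admm} comes into play.
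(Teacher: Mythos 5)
Your argument is correct and is essentially the paper's own proof: both pass to the limit in the first-order optimality condition of \eqref{step1 PDMM} and in the $\mathbf{\Phi}_n$-update (the paper's \eqref{solve pdmm b}) to obtain $\mathbf{\Lambda}_n^{*}=-\nabla f_n(\mathbf{\Phi}^{*})$, then substitute to recover \eqref{stationary point}. The only differences are cosmetic — you derive the consensus identity $\mathbf{\Phi}_n^{*}=\mathbf{\Phi}^{*}$ from the multiplier update rather than reading it off the hypothesis \eqref{conv variables}, and you flag the half-open-box technicality, which the paper silently ignores as well.
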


{\it Remarks:}  The proof of Theorem 2 is presented in Appendix \ref{proof theorem pdmm}.
Here, we should strengthen that Theorem \ref{theorem pdmm} just states the quality of the limit point when the consensus-PDMM algorithm is convergent.
 To date, the convergence analysis of the PDMM algorithm for the general nonconvex optimization model is still an open problem.
 Some state-of-the-art results on this topic, such as \cite{Deng_14}--\hspace{-0.001cm}\cite{Hong_17}, cannot be followed since the nonconvex model \eqref{unconstrained model} cannot satisfy their specific conditions.
 However, the simulation results presented in the next section show that the proposed consensus-PDMM algorithm always converges, and the generated unimodular sequences have good correlation levels.

\subsection{Local Optimality}
Theorems \ref{theorem admm}-\ref{theorem pdmm} show convergence properties of the proposed consensus-ADMM/PDMM algorithms. In this subsection, we presnet a theoretical bound on the quality of local minima of the model \eqref{unconstrained model} (see proof in Appendix \ref{proof local opt}).

\begin{theorem}\label{theorem local opt}
Let $f(\mathbf{\Phi})=\sum\limits_{n\in{\mathcal{T}}} f_n(\mathbf{\Phi})$.
Then, any local minimizer $\mathbf\Phi^*$ of problem \eqref{unconstrained model} is a $\frac{1}{2}$-approximation of its global minimum, i.e.,
\begin{equation}\label{local opt f}
\begin{split}
\frac{f(\mathbf\Phi^*)-f_{\rm min}}{f_{\rm max}-f_{\rm min}}\leq\frac{1}{2},\ \ {0\preceq\mathbf{\Phi}^*\prec2\pi},
\end{split}
\end{equation}
where $f_{\rm min}$ and $f_{\rm max}$ are the global minimum and global maximum value of the objective function in \eqref{unconstrained model} respectively.
\end{theorem}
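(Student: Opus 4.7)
My plan is to exploit the sum-of-squares form of $f$ together with a lifting that makes its convex structure in the Gram matrix explicit, and then combine this with the first-order optimality at $\mathbf{\Phi}^{*}$ and a pairing (involution) argument.

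First, I would rewrite the objective as a convex quadratic in the Gram matrix $\mathbf{M}=\mathbf{X}(\mathbf{\Phi})\mathbf{X}(\mathbf{\Phi})^{H}$. Using cyclic trace identities and the unit-modulus identity $\mathrm{Tr}(\mathbf{X}^{H}\mathbf{X})=NM$, one obtains
\[
f(\mathbf{\Phi})=\sum_{n\in\mathcal{T}}\mathrm{Tr}\!\left(\mathbf{M}\mathbf{S}_{n}^{H}\mathbf{M}\mathbf{S}_{n}\right)-N^{2}M=\sum_{n\in\mathcal{T}}\|\mathbf{M}\mathbf{S}_{n}\|_{F}^{2}-N^{2}M.
\]
Thus $f$ is a convex quadratic in $\mathbf{M}$, and all the nonconvexity of \eqref{unconstrained model} comes from the requirement that $\mathbf{M}=\mathbf{X}\mathbf{X}^{H}$ with $\mathbf{X}$ unimodular, i.e., $\mathbf{M}\succeq 0$, $\mathrm{rank}(\mathbf{M})\le M$, and $\mathrm{diag}(\mathbf{M})=M\mathbf{1}$. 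This reformulation is the main structural lever: the ``shape'' of $f$ over the entire feasible set is controlled by a single convex quadratic.

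Next, I would extract the first-order optimality condition at $\mathbf{\Phi}^{*}$. Because the feasible region in $\mathbf{\Phi}$ has no interior boundary (phases live on circles), local minimality forces $\nabla f(\mathbf{\Phi}^{*})=0$; using the explicit gradient expression in \eqref{gradient f}, this is equivalent to the Wirtinger-type stationarity $\mathrm{Im}\bigl(\overline{x_{i,m}^{*}}\,[\partial f/\partial \mathbf{X}]_{i,m}\bigr)=0$ at every $(i,m)$. I would also record the consequence that, along any one-parameter variation $\mathbf{\Phi}^{*}+t\mathbf{V}$, the restriction $t\mapsto f(\mathbf{\Phi}^{*}+t\mathbf{V})$ is a trigonometric polynomial with frequencies at most $\pm 2$ per coordinate, so its extremal behavior is completely captured by a low-degree Fourier expansion in $t$.

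Third, I would pair $\mathbf{\Phi}^{*}$ with a complementary configuration $\tilde{\mathbf{\Phi}}$ produced by a sign-flip/phase-shift involution $T$ (for instance, flipping by $\pi$ the phases in a strategically chosen subset of rows of $\mathbf{X}^{*}$, so that $\tilde{x}_{s,m}=-x_{s,m}^{*}$ on that subset). Expanding each $|r_{ijn}|^{2}$ under $T$ and exploiting the sum-of-squares form yields a pairing identity
\[
f(\mathbf{\Phi})+f(T(\mathbf{\Phi}))\;\le\;f_{\max}+f_{\min}\qquad\forall\,\mathbf{\Phi},
\]
which, combined with the ordering $f(\mathbf{\Phi}^{*})\le f(T(\mathbf{\Phi}^{*}))$ obtained from the local-optimality analysis along a path joining $\mathbf{\Phi}^{*}$ to $T(\mathbf{\Phi}^{*})$, gives the claimed $\tfrac{1}{2}$-approximation bound upon rearrangement.

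The main obstacle is step three: exhibiting an involution $T$ for which \emph{both} the pairing inequality and the local-optimality ordering hold simultaneously. The pairing inequality requires $T$ to interact cleanly with the sparsity pattern of the shift matrices $\mathbf{S}_{n}$ (so that the change in each $|r_{ijn}|^{2}$ can be controlled uniformly), while the ordering requires $T(\mathbf{\Phi}^{*})$ to be reachable from $\mathbf{\Phi}^{*}$ along a path on which the trigonometric-polynomial structure (degree $\le 2$ per coordinate) together with the PSD Hessian at $\mathbf{\Phi}^{*}$ forces monotone behavior. Identifying this $T$ is the nontrivial combinatorial/analytic ingredient that turns the convex-quadratic-in-$\mathbf{M}$ picture into the sharp $\tfrac{1}{2}$-bound.
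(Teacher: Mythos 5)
Your plan has a genuine gap at its load-bearing step, and you say so yourself: step three is a description of what a proof would need, not a proof. In fact, the pairing inequality $f(\mathbf{\Phi})+f(T(\mathbf{\Phi}))\le f_{\max}+f_{\min}$ cannot hold for a fixed involution $T$ and all $\mathbf{\Phi}$ unless $T$ maps the global maximizer to a global minimizer (take $\mathbf{\Phi}$ to be the maximizer), which no sign-flip/phase-shift map will do in general. What actually closes the argument---and what the paper does in Appendix E---is to lift the quartic to $\mathbf{y}=\mathrm{vec}(\mathbf{x}\mathbf{x}^H)$ so that $f(\mathbf{x})+MN^2=\mathbf{y}^H\mathbf{Q}\mathbf{y}$ with $\mathbf{Q}=\sum_{n\in\mathcal{T}}\sum_{i,j}\mathrm{vec}(\mathbf{B}_i^H\mathbf{S}_n\mathbf{B}_j)\,\mathrm{vec}(\mathbf{B}_i^H\mathbf{S}_n\mathbf{B}_j)^H\succeq0$, and then invoke the local-optimality analysis of \cite{Kisialiou_09}, \cite{Wang_12} for constant-modulus quartics of this form; there the $\tfrac{1}{2}$ bound comes from the \emph{second-order} necessary condition at $\mathbf{\Phi}^*$ combined with an averaging over a family of phase perturbations (not a single deterministic pairing), with the positive semidefiniteness of $\mathbf{Q}$ doing the work your convexity claim was meant to do. Your first-order stationarity and the degree-$\le 2$ trigonometric structure are correct observations but nowhere near sufficient: first-order conditions cannot distinguish a local minimum from a saddle, and the bound is genuinely a second-order statement.

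Separately, your step one contains an error. For the $N\times N$ Gram matrix $\mathbf{M}=\mathbf{X}\mathbf{X}^H$ one has $\|\mathbf{X}^H\mathbf{S}_n\mathbf{X}\|_F^2=\mathrm{Tr}(\mathbf{M}\mathbf{S}_n^H\mathbf{M}\mathbf{S}_n)$, but this is \emph{not} equal to $\|\mathbf{M}\mathbf{S}_n\|_F^2=\mathrm{Tr}(\mathbf{M}^2\mathbf{S}_n\mathbf{S}_n^H)$, and the quadratic form $\mathbf{M}\mapsto\mathrm{Tr}(\mathbf{M}\mathbf{S}_n^H\mathbf{M}\mathbf{S}_n)$ is indefinite on Hermitian matrices (e.g.\ $\mathbf{M}=\mathrm{diag}(1,-1)$ and $\mathbf{S}$ the $2\times2$ upper shift give the value $-1$), so $f$ is not a convex quadratic in this lifting. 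The convexity you rightly identify as the structural lever only becomes manifest in the larger lifting $\mathbf{y}=\mathrm{vec}(\mathbf{x}\mathbf{x}^H)$ used by the paper. The correct skeleton is therefore: lift to $\mathbf{y}$, observe $\mathbf{Q}\succeq0$, and then supply (or cite) the second-order/averaging argument; as written, your proposal establishes neither the convex lifting nor the inequality that yields \eqref{local opt f}.
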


\subsection{Efficient Implementations}

Observing the proposed consensus-ADMM/PDMM algorithms in Table I and Table II, we can see that the main computational difficulty lies in calculating $\{\nabla f_n(\mathbf{\Phi}), n\in\mathcal{T}\}$. In the following, we show that the gradients can be obtained efficiently by exploiting their special sparsity structures.

First, we define vectors ${\bf{v}}_n = {\rm vec}(\mathbf{X}(\mathbf{\Phi})^H{\bf{S}}_n\mathbf{X}(\mathbf{\Phi})), n\in\mathcal{{T}}$. Then, $f_n(\mathbf{\Phi})$ can be simplified as
\begin{equation}\label{eq_f}
\begin{split}
f_n(\mathbf{\Phi}) = \left\{
                       \begin{array}{ll}
                         \|{ {\bf v}_0}({\mathbf{\Phi}}) - {\bf{c}}\|_2^2 ,& n=0,  \\
                         \|{\bf{v}}_n(\mathbf{\Phi})\|_2^2, & n\in \mathcal{T}\backslash 0,
                       \end{array}
                     \right.
\end{split}
\end{equation}where ${\bf{c}}={\rm vec}(N{\bf{I}})$.
Then, we can compute $\nabla f_n({\mathbf{\Phi}})$ as \eqref{gradient f}.
Second, from \eqref{XX}, we can see that there are $2(M\!-\!1)$ nonzero elements in $\frac{\partial \mathbf{X(\Phi)}^H\mathbf{X(\Phi)}}{\partial\phi_{i,m}}$. It indicates that $2(M-1)$ complex multiplication operations at most are needed to obtain $\frac{\partial{{\bf v}_0^H(\mathbf{\Phi})}}{\partial\phi_{i,m}}\!{({\bf v}_0(\mathbf{\Phi})\!-\!\bf{c})}$ and $\frac{\partial{{\bf{v}}_{n}^H(\mathbf{\Phi})}}{\partial\phi_{i,m}}{\bf{v}}_{n}(\mathbf{\Phi})$. Since ${\mathbf{\Phi}}$ is an $N$-by-$M$ matrix, computational cost of obtaining all gradients $\{\nabla f_n({\mathbf{\Phi}}),n\in\mathcal{T}\}$ in each iteration is roughly  $\mathcal{O}(2M^2N|\mathcal{T}|)$.
Furthermore, observing \eqref{step3 ADMM} and \eqref{solutions}, we can see that the computational cost of other terms is far less than$\nabla f_n({\mathbf{\Phi}}),n\in\mathcal{T}$. This is also true for \eqref{step3 PDMM} and \eqref{solutions PDMM}. Hence, we can conclude that the total cost in each consensus-ADMM/PDMM iteration is roughly $\mathcal{O}(2M^2N|\mathcal{T}|)$.


\section{Simulation results}
\label{sec:typestyle}

\begin{figure*}[htbp]
\centering
\subfigure{
\begin{minipage}[t]{0.5\linewidth}
\centering
\centerline{\includegraphics[scale=0.53]{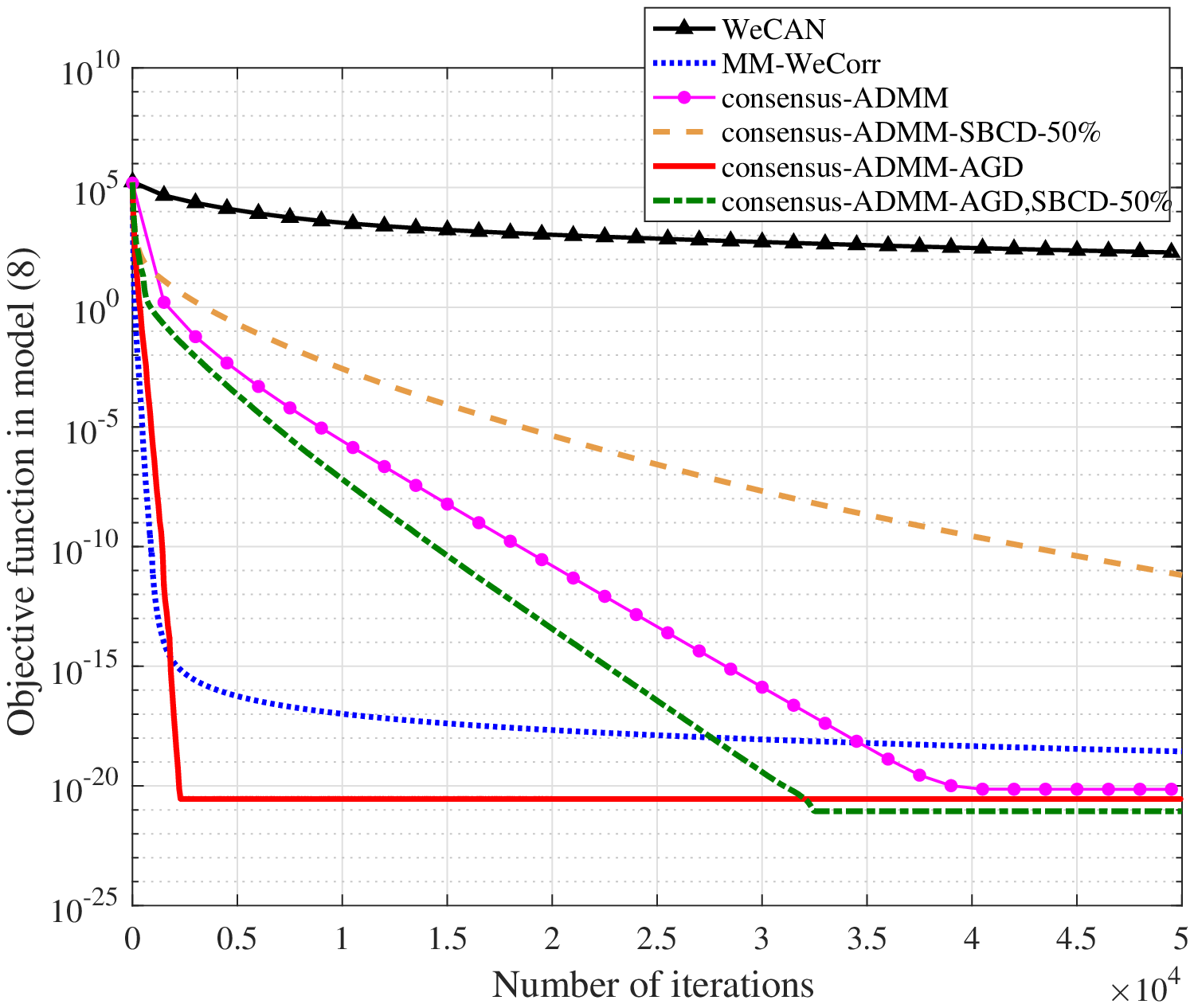}}
\end{minipage}%
}%
\subfigure{
\begin{minipage}[t]{0.5\linewidth}
\centering
\centerline{\includegraphics[scale=0.53]{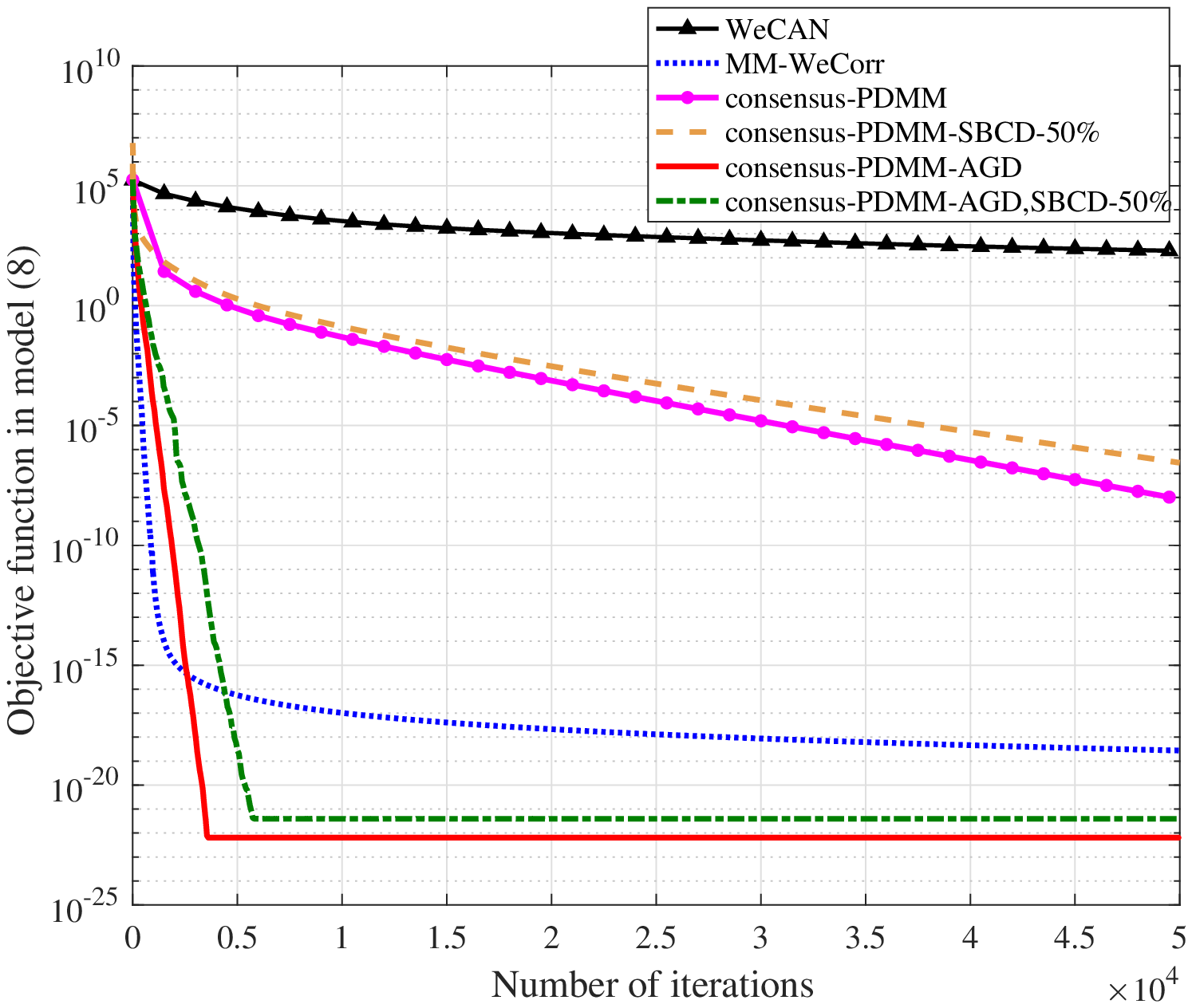}}
\end{minipage}
}
\caption{Comparisons of convergence performance with $N=256,M=3,\mathcal{T}=[1,39]$. SBCD-$50\%$ means that half of the elements in set $\mathcal{T}$ are updated.}
\label{conv fig 256}
\end{figure*}

\begin{figure*}[htbp]
\centering
\subfigure{
\begin{minipage}[t]{0.5\linewidth}
\centering
\centerline{\includegraphics[scale=0.55]{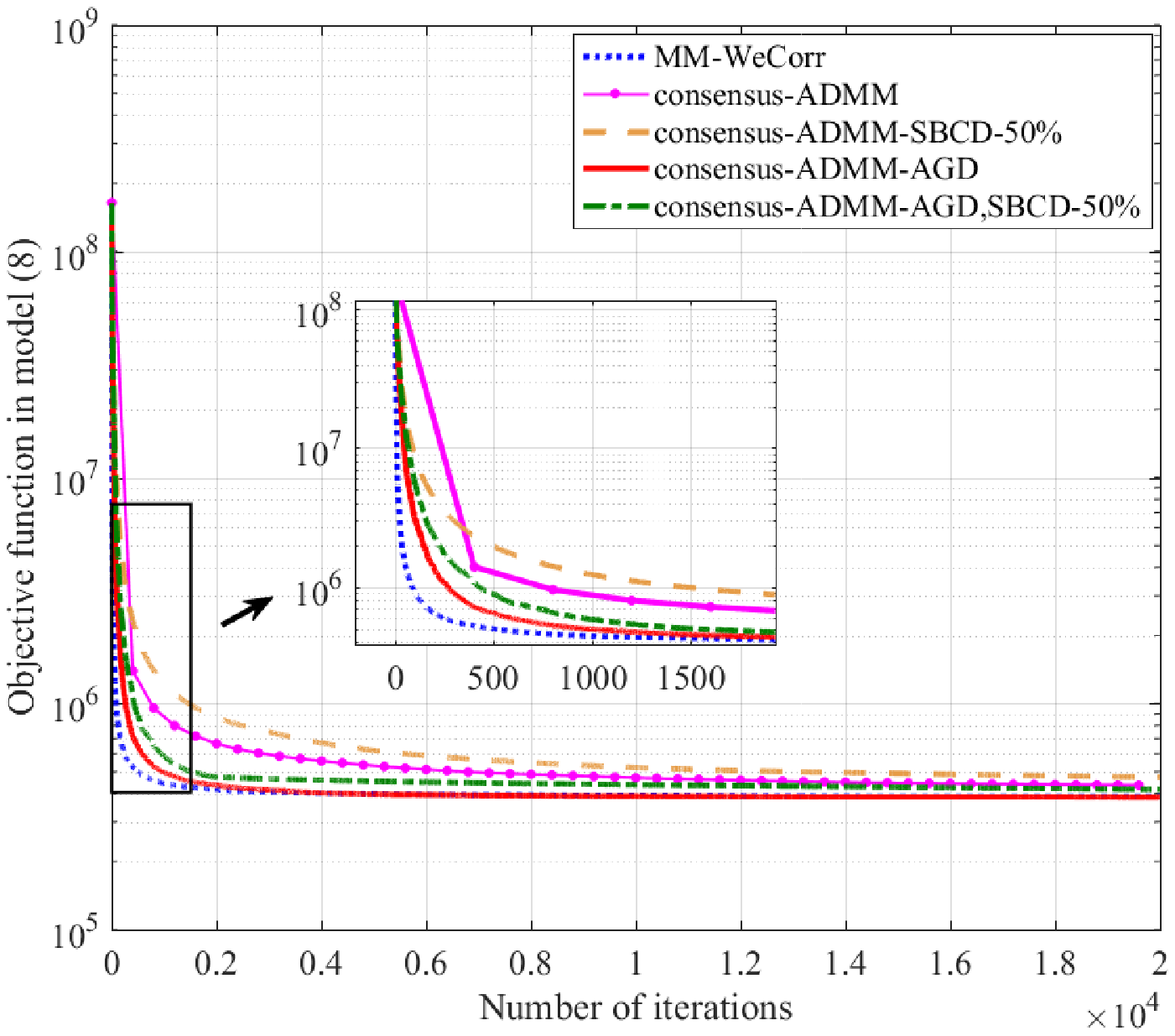}}
\end{minipage}%
}%
\subfigure{
\begin{minipage}[t]{0.5\linewidth}
\centering
\centerline{\includegraphics[scale=0.55]{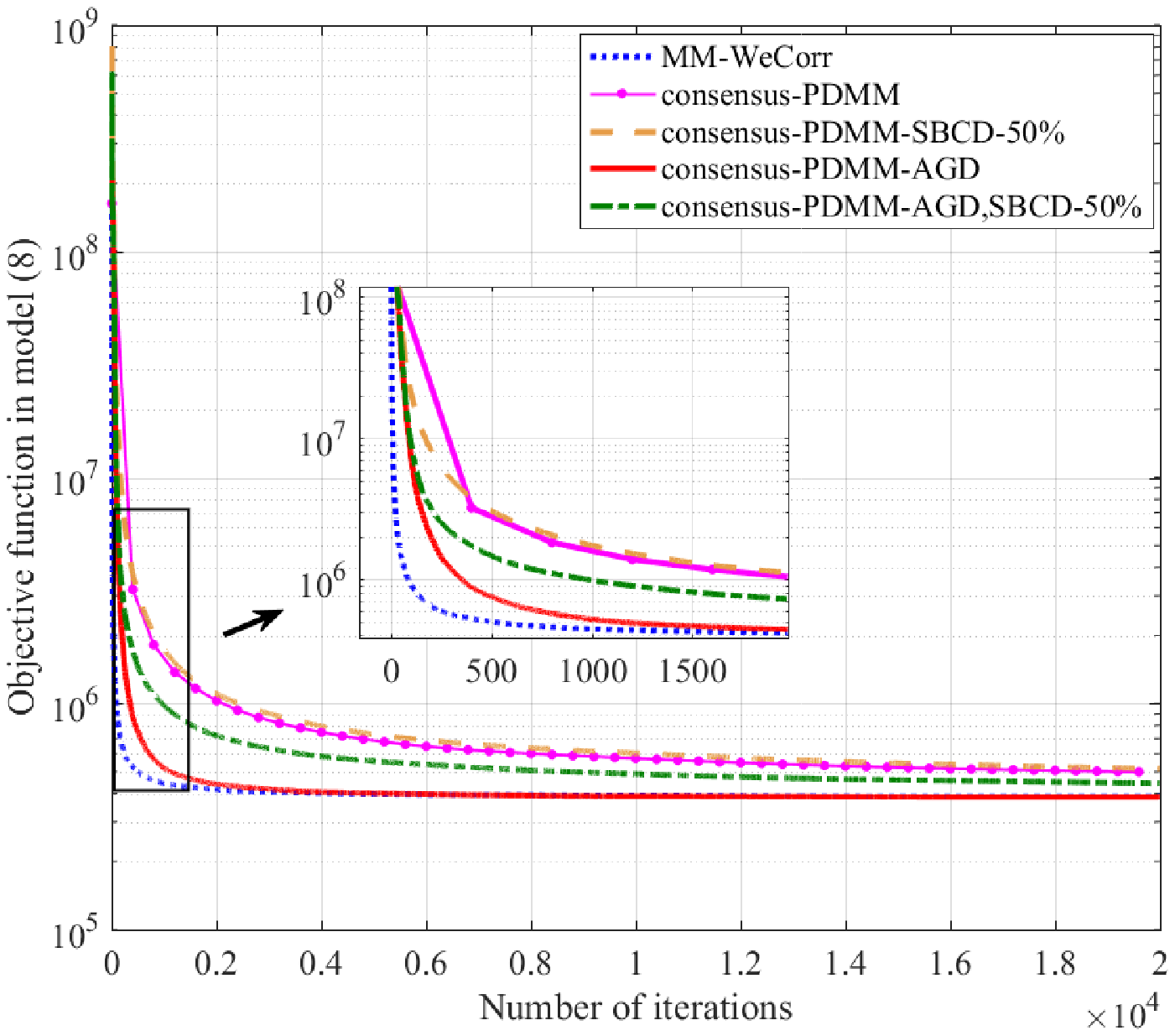}}
\end{minipage}
}
\caption{Comparisons of convergence performance with $N=2048,M=32,\mathcal{T}=[1,39]$.}
\label{conv fig 2048}
\end{figure*}

In this section, several numerical examples are presented to show the performance of the proposed consensus-ADMM/PDMM algorithms.
The simulation parameters are set as follows: For the consensus-ADMM algorithm, we define primal/dual residuals \cite{Boyd_11} at the $k$-th iteration as
\[
\begin{split}
\mathbf{R}^{k}_n = \rho_n(\mathbf{\Phi}_n^{k+1} -\mathbf{\Phi}^{k}),\ \mathbf{D}^{k}=\mathbf{\Phi}^{k+1}-\mathbf{\Phi}^{k}.
\end{split}
\]
For the consensus-PDMM algorithm, $n\in\mathcal{T}\backslash 0$. Then, the termination criterion in Table I and Table II is set as
$
 \sum\limits_{n\in \mathcal{T} }\left\|\mathbf{R}^{k}_n\right\|^2_F +|\mathcal{T}|\left\|\mathbf{D}^{k}\right\|^2_F\leq \epsilon,
$
or the maximum iteration number is reached. In the simulations, we set $\epsilon=10^{-4}$ and maximum iteration number as $5\times10^4$. 
Moreover, to improve the algorithms' performance, we exploited stochastic block coordinate descent (SBCD) and accelerated gradient descent (AGD) \cite{Wang_19} to reduce the computational complexity and speed up convergence respectively.
 In comparison, two state-of-the-art methods, WeCAN \cite{He_09} and MM-WeCorr \cite{Song_16}, are carried out here. All approaches
are initialized with the random phase sequence. Besides, all experiments are performed in MATLAB 2016b/Windows 7 environment on a computer with 2.1GHz Intel 4100$\times$2 CPU and 64GB RAM.

Figures \ref{conv fig 256}-\ref{conv fig 2048} show the convergence characteristics of the proposed consensus-ADMM/PDMM algorithms and other comparison algorithms. Here, it should be noted that we do not give the exact proof of the convergence for the consensus-PDMM algorithm. However, from these figures, we can see that all the algorithms show pretty converge results. Specifically, We-Can converges slowest and MM-WeCorr enjoys pretty fast converge speed. Moreover, AGD strategy can speed up the convergence of our proposed ADMM/PDMM approaches very well. In comparison, SBCD strategy slows down the convergence rate. However, we should note that it has lower computational complexity. The parameter of $50\%$\footnotemark can be changed to attain a tradeoff between convergence rate and computational complexity.

\footnotetext{In the $k$-th iteration, elements are chosen from $\mathcal{T}$ to construct its subset $\mathcal{N}^{k}$ with the probability
$
{\rm Pr}(n\in\mathcal{N}^{k}) = p_n$. In the simulations, $p_n$ is set as $50\%$.
See details in \cite{Tseng_01}.
}

\begin{figure*}[htbp]
\centering
\subfigure{
\begin{minipage}[t]{0.5\linewidth}
\centering
\centerline{\includegraphics[scale=0.41]{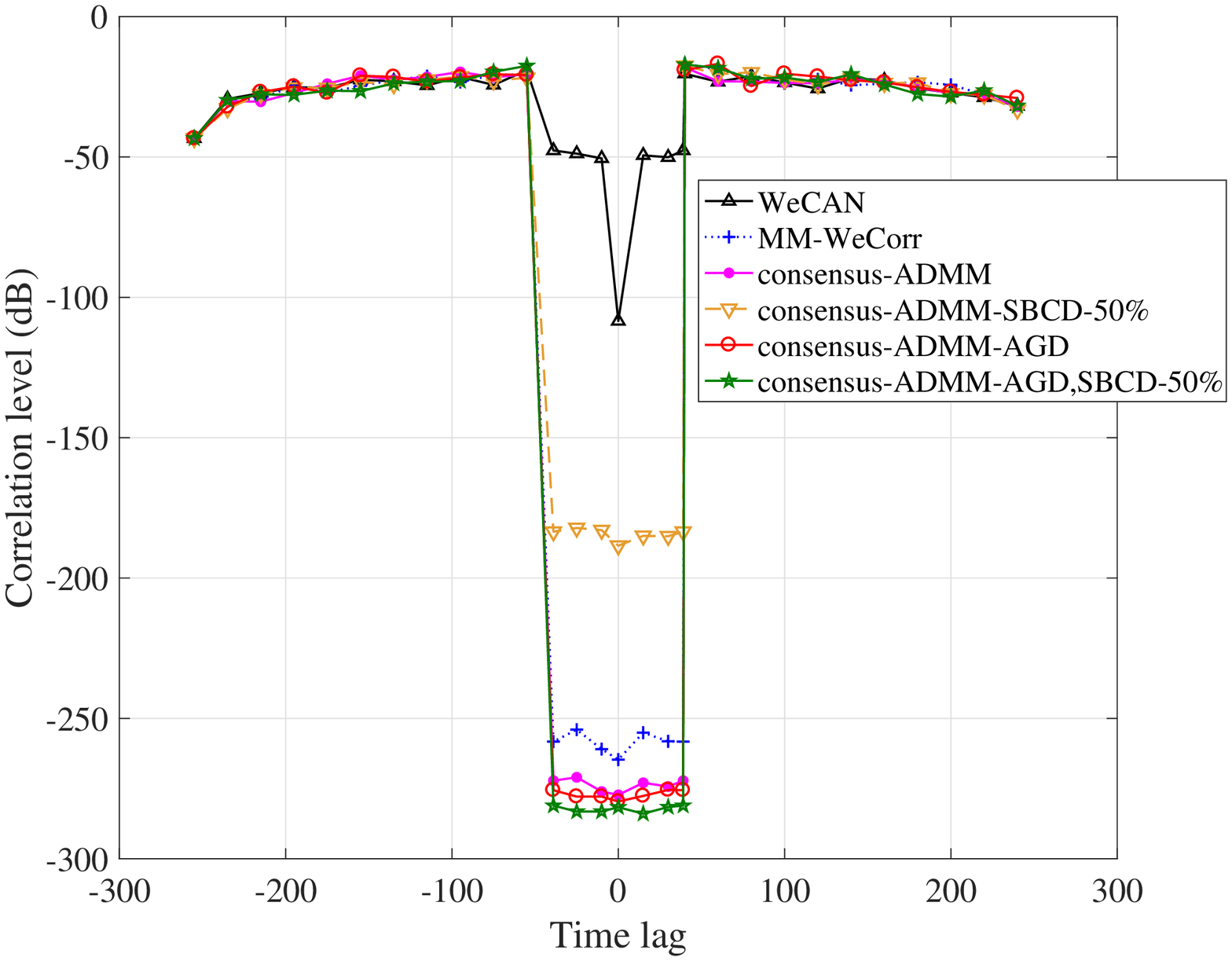}}
\end{minipage}%
}%
\subfigure{
\begin{minipage}[t]{0.5\linewidth}
\centering
\centerline{\includegraphics[scale=0.41]{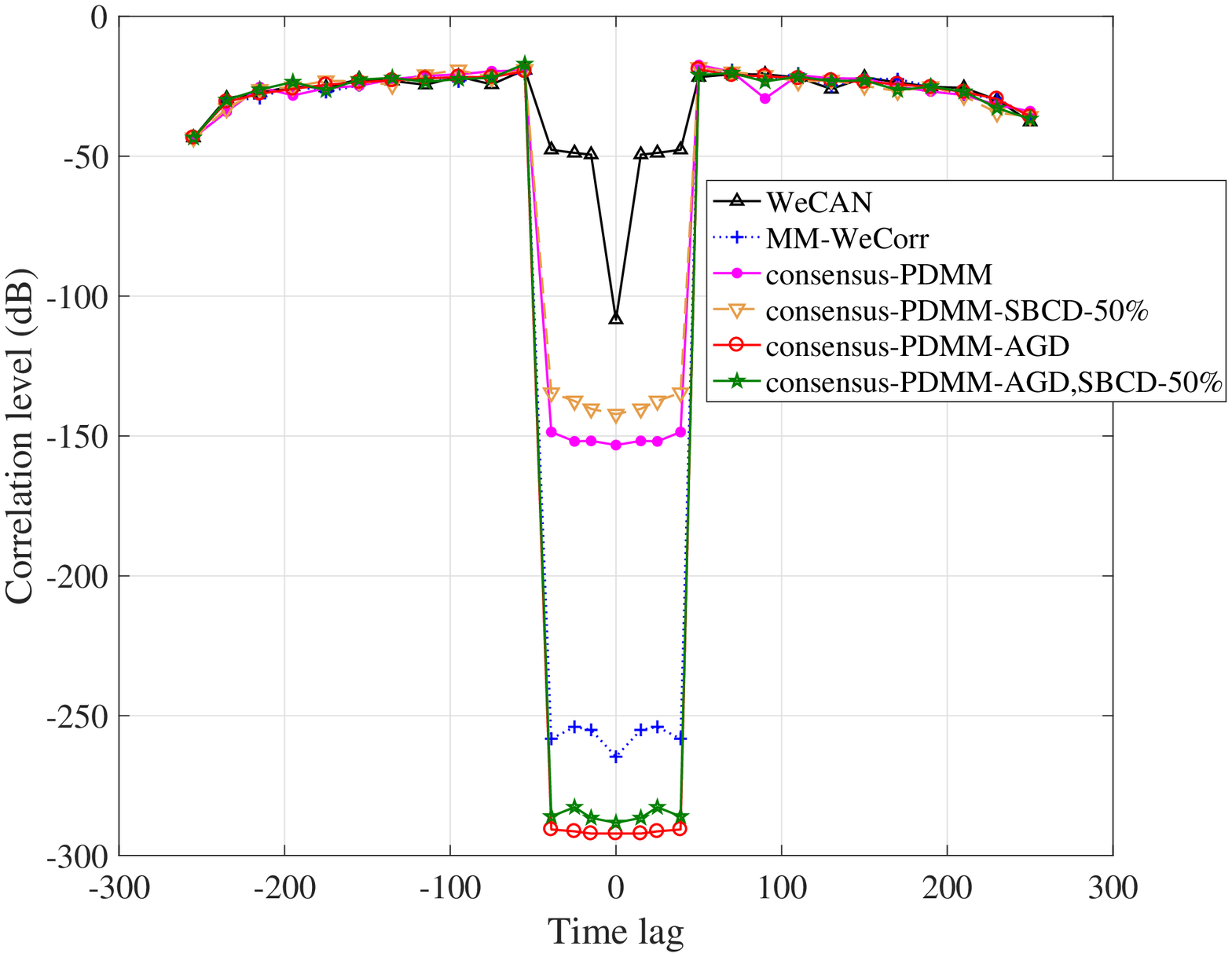}}
\end{minipage}
}
\caption{Correlation levels with $N=256,M=3,\mathcal{T}=[0,39]$.}
\label{corr 1-39}
\end{figure*}

\begin{figure*}[htbp]
\centering
\subfigure{
\begin{minipage}[t]{0.5\linewidth}
\centering
\centerline{\includegraphics[scale=0.315]{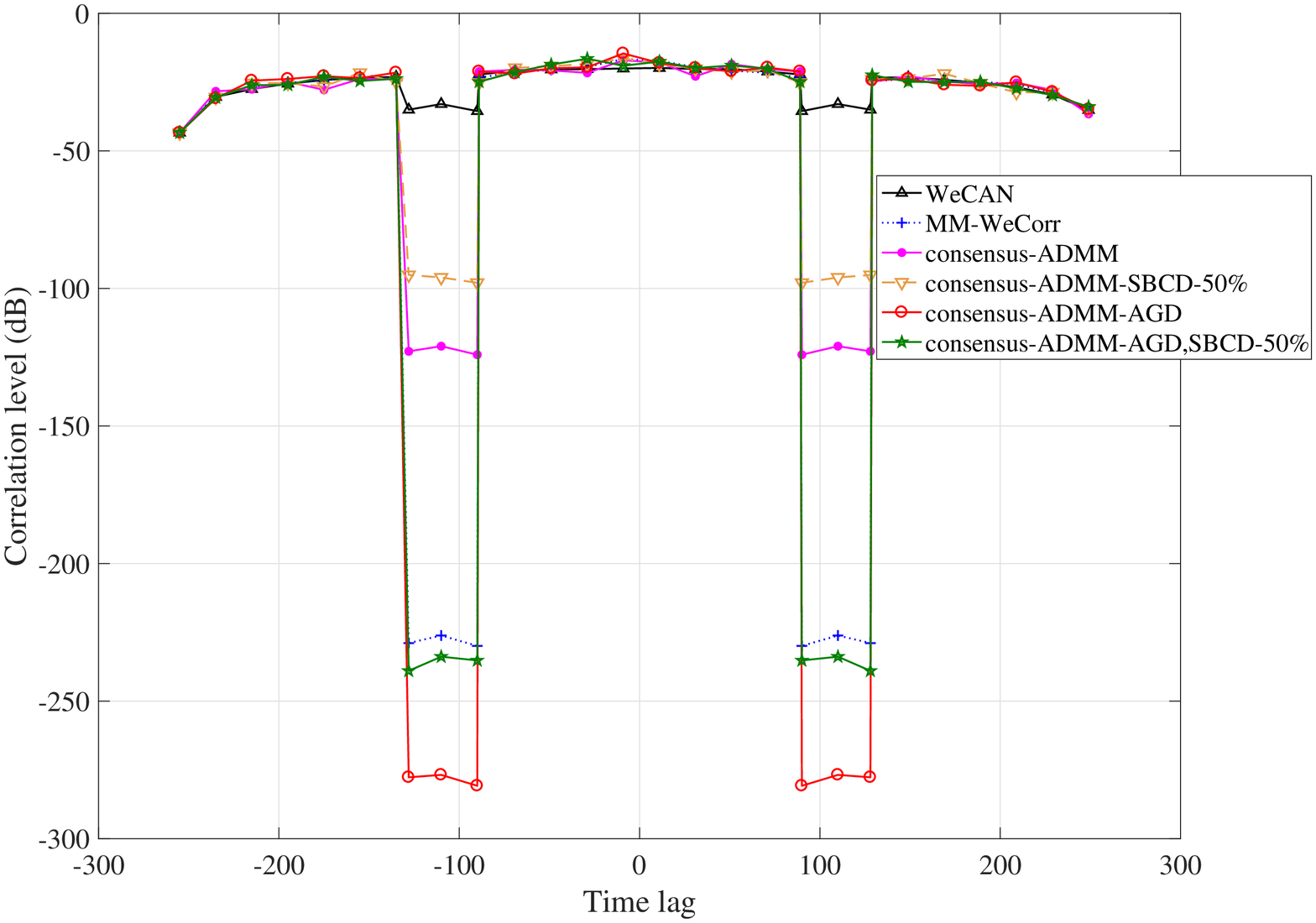}}
\end{minipage}%
}%
\subfigure{
\begin{minipage}[t]{0.5\linewidth}
\centering
\centerline{\includegraphics[scale=0.315]{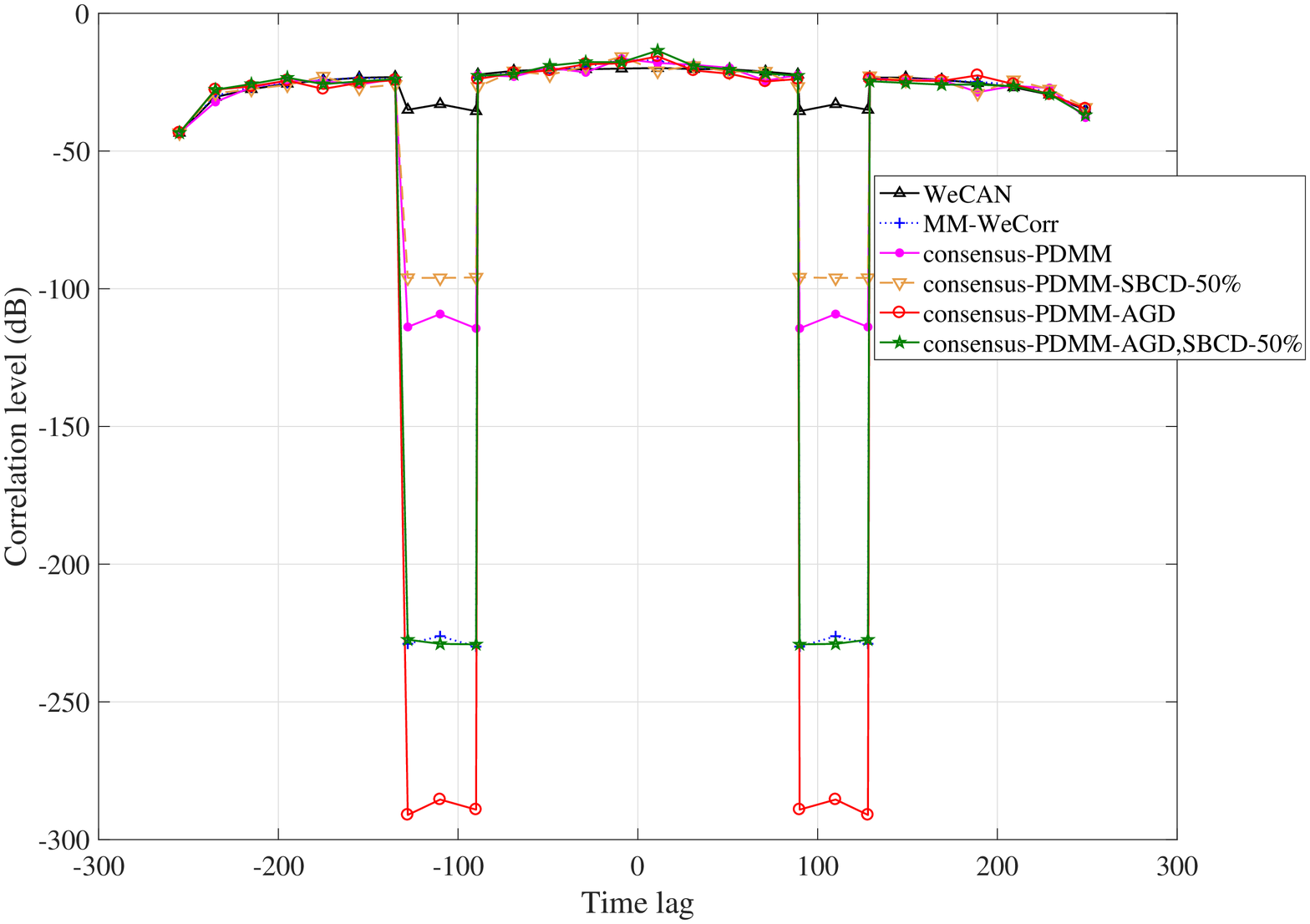}}
\end{minipage}
}
\caption{Correlation levels with $N=256,M=3,\mathcal{T}=[90,128]$.}
\label{corr 90-128}
\end{figure*}

\begin{figure*}[htbp]
\centering
\subfigure{
\begin{minipage}[t]{0.5\linewidth}
\centering
\centerline{\includegraphics[scale=0.37]{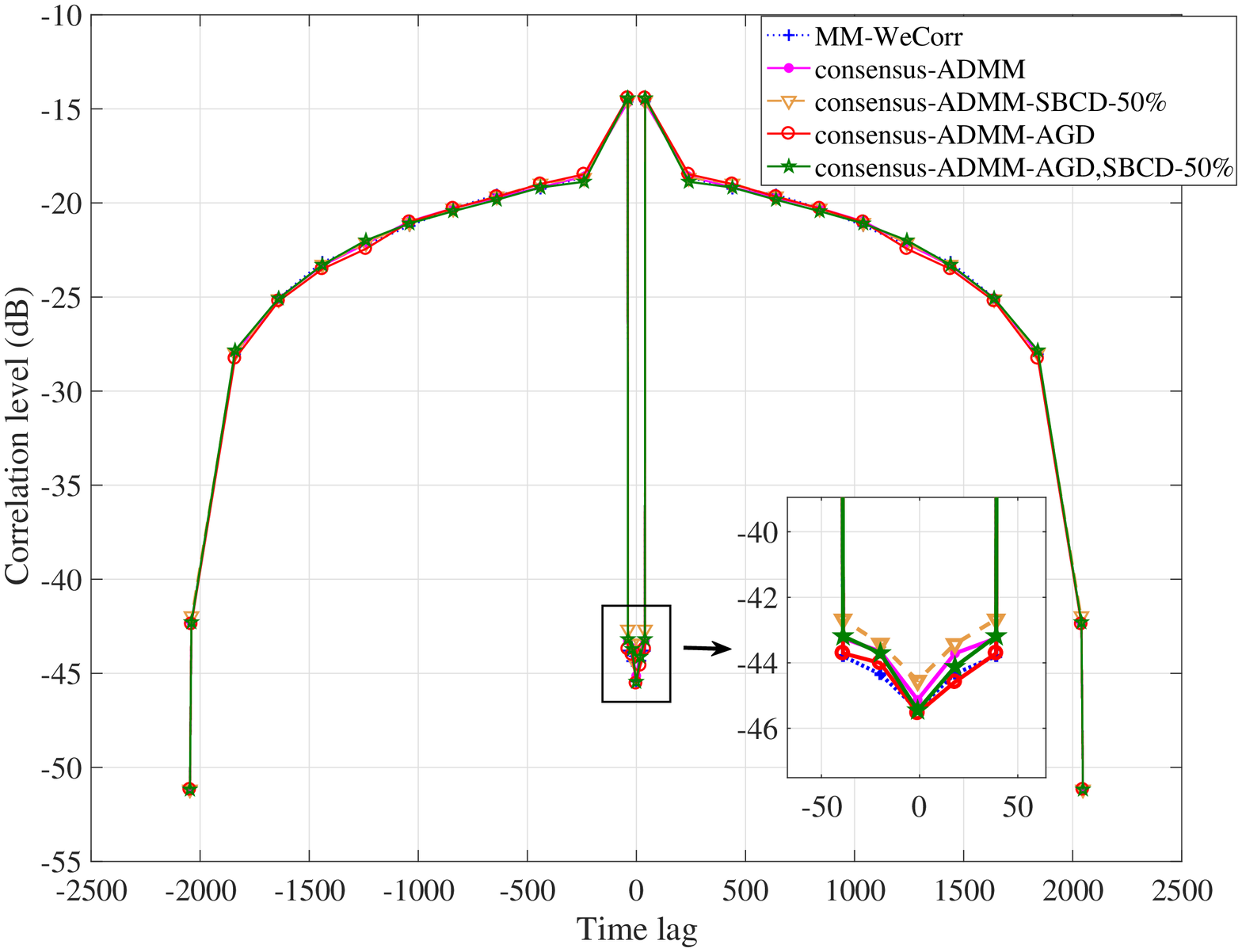}}
\end{minipage}%
}%
\subfigure{
\begin{minipage}[t]{0.5\linewidth}
\centering
\centerline{\includegraphics[scale=0.37]{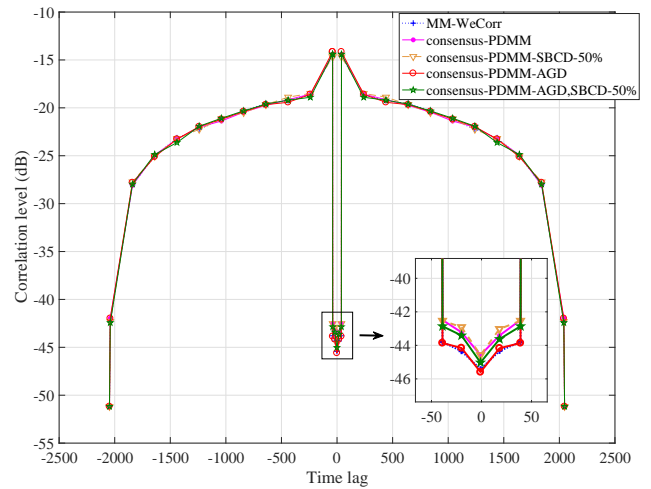}}
\end{minipage}
}
\caption{Correlation levels with $N=2048,M=256,\mathcal{T}=[0,39]$.}
\label{corr 2018 1-39}
\end{figure*}

\begin{figure*}[htbp]
\centering
\subfigure{
\begin{minipage}[t]{0.5\linewidth}
\centering
\centerline{\includegraphics[scale=0.37]{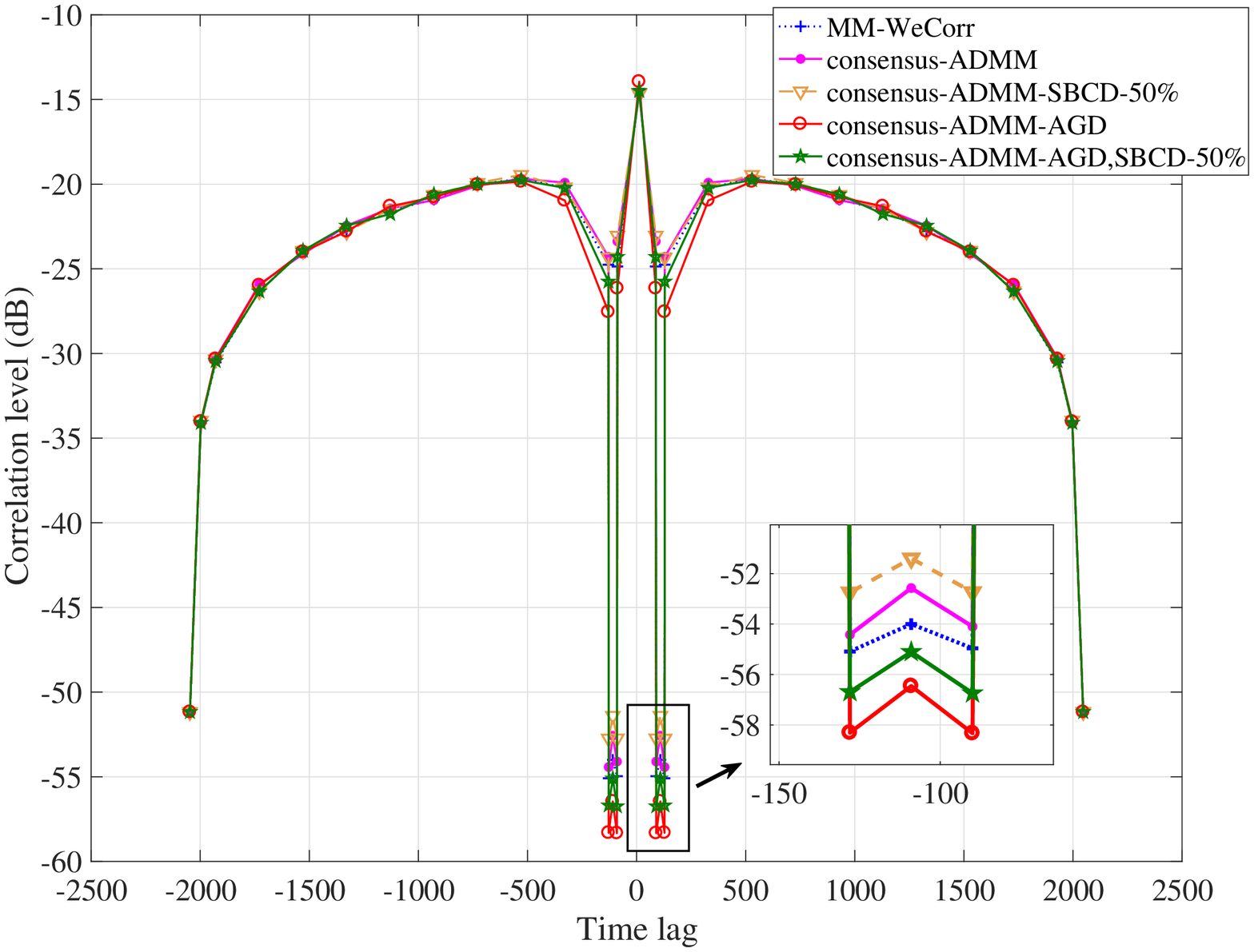}}
\end{minipage}%
}%
\subfigure{
\begin{minipage}[t]{0.5\linewidth}
\centering
\centerline{\includegraphics[scale=0.37]{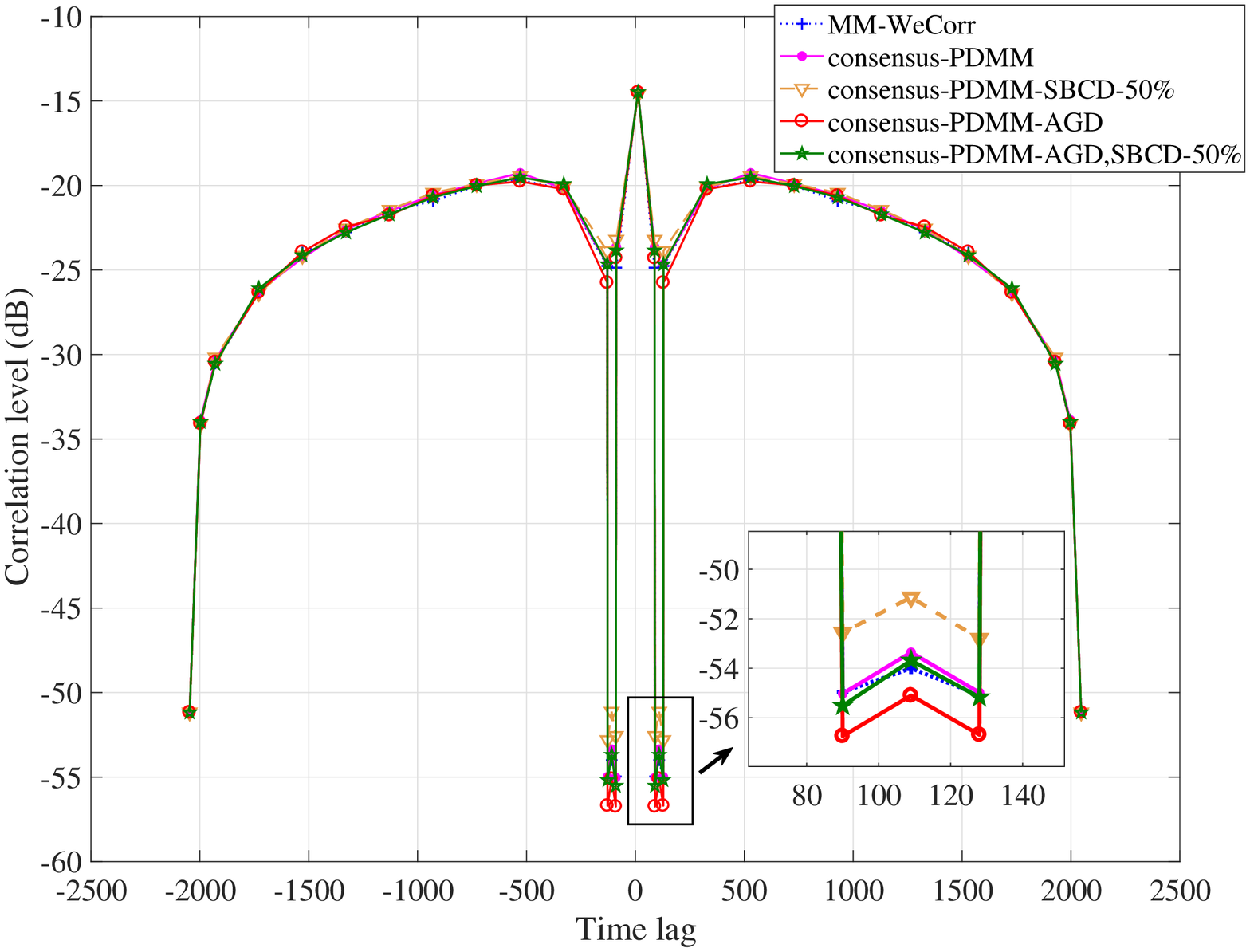}}
\end{minipage}
}
\caption{Correlation levels with $N=2048,M=256,\mathcal{T}=[90,128]$.}
\label{corr 2048 90-128}
\end{figure*}
\begin{table*}[htbp]
\caption{the minimum and average values of the correlation level in $\rm dB$ for interval $[0, 39]$ achieved by different algorithms}
\label{corr level}
\centering
\begin{tabular}{cc|cc|cc|cc|cccccc}
\hline
\hline
 &   & \multicolumn{2}{|c|}{ WeCAN} & \multicolumn{2}{|c|}{ MM-WeCorr} & \multicolumn{2}{|c}{ consensus-ADMM}& \multicolumn{2}{|c}{ consensus-PDMM}      \\
\hline
 $N$ & $M$  & average & minimum & average  & minimum & average & minimum & average & minimum    \\
\hline
 \multirow{2}{*}{256} & 3 & -41.4 & -43.0 & -247.8  & -251.7 & \bf{-279.4} & \bf{-285.8}& \bf{-291.7} & \bf{-295.3}  \\
 & 4 &  -34.3 & -35.1 & -44.2 & -44.7  & \bf{-44.1}  & \bf{-44.5} & \bf{-44.3}  & \bf{-44.7}  \\
\hline
 \multirow{2}{*}{512}  & 4 & -54.5 & -55.5 & -278.6  & -279.0 & \bf{-297.3} & \bf{-298.5}& \bf{-308.3} & \bf{-308.7} \\
 & 8 & -27.6 & -28.6 & -45.8  & -48.1  & \bf{-45.8} & \bf{-47.9}  & \bf{-45.9} & \bf{-48.0} \\
\hline
 \multirow{2}{*}{1024}   & 8& N/A & N/A &-268.5 & -271.1 & \bf{-292.1} & \bf{-293.4} & \bf{-295.7} & \bf{-297.1}   \\
                         & 16& N/A & N/A &-45.7 & -47.3 & \bf{-46.0} & \bf{-48.3} & \bf{-46.0} & \bf{-48.2}\\
\hline
\multirow{2}{*}{2048}   & 16& N/A & N/A &-250.6 & -252.3 & \bf{-299.4} & \bf{-300.1} & \bf{-300.2} & \bf{-301.4}   \\
                         & 32& N/A & N/A &-44.4 & -45.8 & \bf{-44.4} & \bf{-45.8} & \bf{-44.4} & \bf{-46.0}\\
\hline
\hline
\end{tabular}
\end{table*}

Figures \ref{corr 1-39}-\ref{corr 2048 90-128} compare the {\it correlation level} between the proposed consensus-ADMM/PDMM algorithms and the MM-WeCorr approach and WeCan approach. Here, the parameter {\it correlation level} (dB) is defined as
\[
\begin{split}
{\rm correlation ~level}= &20\lg\frac{\|\mathbf{X}^H\mathbf{S}_n\mathbf{X}-N\mathbf{I}\delta_n\|_F^2}{MN^2},~n\in \mathcal{{T}}.
\end{split}
\]
From the figures, we can see that the correlation levels of section $n\geq0$ are symmetrical to that of section $n\leq0$. Compared to WeCAN and MM-WeCorr, the proposed algorithms offer lower correlation levels. This fact is in accordance with the simulation results in Figures \ref{conv fig 256}-\ref{conv fig 2048}.

We tabulate the minimum and average values of the correlation levels achieved by these algorithms in Table \ref{corr level}. For each ($M, N$) pair, the algorithms are repeated $50$ times. The results of different ($M,N$) pairs show that the sequence sets generated by the proposed consensus-ADMM/PDMM algorithms have the optimal correlation property (both minimum and average correlation levels). In addition, we should note that in each iteration, their computational complexities are $\mathcal{O}(M^2N|\mathcal{T}|)$ which are competitive with $\mathcal{O}(M^2N\log N)$ of MM-WeCorr, and smaller than $\mathcal{O}(M^2N^2)$ of WeCAN. However, unlike MM-WeCorr and WeCAN, the consensus-ADMM/PDMM algorithms can be performed in parallel, which means that they are more suitable for large-scale applications from a practical viewpoint of implementation.

\section{Conclusion}
\label{sec:Conclusion}

In this paper, we formulated the unimodular sequences design problem as a consensus-like nonconvex optimization model. Then, two efficient algorithms, named by consensus-ADMM and consensus-PDMM,  were proposed to solve the formulated problem. We proved that, if proper parameters are chosen, the proposed consensus-ADMM algorithm converges and the solution of the consensus-PDMM is guaranteed to be a stationary point of the original problem when it is convergent.
Moreover, we also provided an analysis on the local optimality of the formulated nonconvex optimization problem and computational complexity of the proposed consensus-ADMM/PDMM approaches.
Numerical experiments showed that, compared to the state-of-the-art methods, the proposed algorithms obtained lower correlation sidelobe levels. Besides, the parallel implementation structure let the proposed algorithms be more suitable for large-scale applications.

\appendices
\section{Proof of Lemma 1}
\begin{figure*}[t]
\begin{equation}\label{proof lipschitz}
\begin{split}
 \frac{\|\nabla f_0({\mathbf{\Phi}})-\nabla f_0(\mathbf{\hat{\Phi}})\|^2_F}{\|{\mathbf{\Phi}}-\mathbf{\hat{\Phi}}\|^2_F}
 = \frac{\displaystyle\sum_{i=1}^{N}\sum_{m=1}^M\left|\frac{\partial f_0(\mathbf{\Phi})}{\partial \phi_{i,m}}-\frac{\partial f_0(\mathbf{\hat{\Phi}})}{\partial \hat{\phi}_{i,m}}\right|^2}{\displaystyle\sum_{i=1}^{N}\sum_{m=1}^M|\phi_{i,m}-\hat{\phi}_{i,m}|^2}
 \leq \max_{i,m}\left\{\left|\frac{\frac{\partial f_0(\mathbf{\Phi})}{\partial \phi_{i,m}}-\frac{\partial f_0(\mathbf{\hat{\Phi}})}{\partial \hat{\phi}_{i,m}}}{\phi_{i,m}-\hat{\phi}_{i,m}}\right|^2\right\}.
\end{split}
\end{equation}
\hrulefill
\vspace*{4pt}
\end{figure*}

First, for $\nabla f_0({\mathbf{\Phi}})$, we have the derivations in \eqref{proof lipschitz}. Then, according to the Lagrangian mean value theorem, since $ f_0(\mathbf{\Phi})$ is continuous and differentiable, there exists some point $\bar{\phi}_{i,m}$ between $\phi_{i,m}$ and $\hat\phi_{i,m}$ which satisfies
\begin{equation}\label{mean value theorem}
\frac{\frac{\partial f_0(\mathbf{\Phi})}{\partial \phi_{i,m}}-\frac{\partial f_0(\mathbf{\hat{\Phi}})}{\partial \hat{\phi}_{i,m}}}{\phi_{i,m}-\hat{\phi}_{i,m}}= \frac{\partial^2 f_0(\mathbf{\Phi})}{\partial \bar{\phi}_{i,m}^2}.
\end{equation}
Combining \eqref{proof lipschitz} and \eqref{mean value theorem}, we obtain
\begin{equation}\label{lipschitz phi}
\begin{split}
\!\!\!\!\!\!\frac{\|\nabla f_0(\mathbf{\Phi})\!\!-\!\!\nabla f_0(\mathbf{\hat{\Phi}})\|_F}{\|{\mathbf{\Phi}}-\mathbf{\hat{\Phi}}\|_F}\!\leq \max_{i,m}\!\left\{\!\left|\frac{\partial^2 f_0(\mathbf{\Phi})}{\partial \bar{\phi}_{i,m}^2}\right|\!\right\}\!.
\end{split}
\end{equation}
Moreover, we have \eqref{max second derivative}.
\begin{figure*}
\begin{equation}\label{max second derivative}
\begin{split}
&\left|\frac{\partial^2 f_0(\mathbf{\Phi})}{\partial \bar{\phi}_{i,m}^2}\right|
=\left|2{\rm Re}\left(\!\frac{\partial\mathbf{v}_0^H(\mathbf{\Phi}) }{\partial \bar{\phi}_{i,m}}\frac{\partial\mathbf{v}_0(\mathbf{\Phi}) }{\partial \bar{\phi}_{i,m}}+\!\frac{\partial^2\mathbf{v}_0^H(\mathbf{\Phi}) }{\partial\bar{\phi}_{i,m}^2}(\mathbf{v}_0(\mathbf{\Phi})-\mathbf{c})\!\right)\right|
\leq2\left|\!\frac{\partial\mathbf{v}_0^H(\mathbf{\Phi}) }{\partial \bar{\phi}_{i,m}}\frac{\partial\mathbf{v}_0(\mathbf{\Phi}) }{\partial \bar{\phi}_{i,m}}\right|\!+ \!2\left|\!\frac{\partial^2\mathbf{v}_0^H(\mathbf{\Phi}) }{\partial\bar{\phi}_{i,m}^2}(\mathbf{v}(\mathbf{\Phi})-\mathbf{c})\right|.
\end{split}
\end{equation}
\hrulefill
\vspace*{4pt}
\end{figure*}
From \eqref{XX}, we can see that there are at most $2M-1$ nonzero elements in $\frac{\partial\mathbf{v}_0(\mathbf{\Phi})}{\partial{\bar\phi_{i,m}}}$  and $\frac{\partial^2\mathbf{v}_0^H(\mathbf{\Phi}) }{\partial\bar{\phi}_{i,m}^2}$ respectively. Then, the first term in \eqref{max second derivative} should satisfy the following inequality.
\begin{equation}\label{bound first}
\begin{split}
\left|\!\frac{\partial\mathbf{v}_0^H(\mathbf{\Phi})}{\partial \bar{\phi}_{i,m}}\frac{\partial\mathbf{v}_0(\mathbf{\Phi}) }{\partial \bar{\phi}_{i,m}}\right|\leq 2(M-1).
\end{split}
\end{equation}
Since the maximum modulus of elements in $(\mathbf{v}_0(\mathbf{\Phi})-\mathbf{c})$ is $N$, we can obtain the following inequality for the second term in \eqref{max second derivative}
\begin{equation}\label{bound second}
\left|\!\frac{\partial^2\mathbf{v}_0^H(\mathbf{\Phi}) }{\partial\bar{\phi}_{i,m}^2}(\mathbf{v}(\mathbf{\Phi})-\mathbf{c})\right| \leq 2(M-1)N
\end{equation}
Plugging \eqref{bound first} and \eqref{bound second} into the right side of \eqref{max second derivative}, we have
\begin{equation}\label{}
\left|\frac{\partial^2 f_0(\mathbf{\Phi})}{\partial \bar{\phi}_{i,m}^2}\right|\leq 4(M\!-\!1)(N+1).
\end{equation}

Combining the above results with \eqref{lipschitz phi}, we can see that $\nabla_{\mathbf{\Phi}}f_0(\mathbf{\Phi})$ is Lipschitz continuous with the constant $L_0>4(M-1)(N+1)$.

Second, for $\nabla f_n(\mathbf{\Phi})$, there exists
\begin{equation}\label{lipschitz phi n}
\begin{split}
\frac{\|\nabla f_n({\mathbf{\Phi}})-\nabla f_n(\mathbf{\hat{\Phi}})\|_F}{\|{\mathbf{\Phi}}\!-\!\mathbf{\hat{\Phi}}\|_F} \!\leq \max_{i,m}\left\{\left|\frac{\partial^2 f_n(\mathbf{\Phi})}{\partial\bar{\phi}_{i,m}^2} \right|\right\}.
\end{split}
\end{equation}
For $\frac{\partial^2 f_n(\mathbf{\Phi})}{\partial\bar{\phi}_{i,m}^2} $, we have
\begin{equation}\label{second derivative phi n}
\begin{split}
\hspace{-0.45cm}&\left|\!\frac{\partial^2 f_n(\mathbf{\Phi})}{\partial\bar{\phi}_{i,m}^2} \!\right|\!\!=\!\!\left|2{\rm Re}\!\left(\!
\frac{\partial\mathbf{v}_n^H(\mathbf{\Phi}) }{\partial \bar{\phi}_{i,m}}\!\frac{\partial\mathbf{v}_n(\mathbf{\Phi}) }{\partial \bar{\phi}_{i,m}}\!+\!
\mathbf{v}_n^H(\mathbf{\Phi})\frac{\partial^2\mathbf{v}_n(\mathbf{\Phi}) }{\partial \bar{\phi}_{i,m}^2} \!\right)\!\right|\\
\hspace{-0.45cm}&\leq \!2\left|
\frac{\partial\mathbf{v}_n^H(\mathbf{\Phi}) }{\partial \bar{\phi}_{i,m}}\!\frac{\partial\mathbf{v}_n(\mathbf{\Phi})}{\partial \bar{\phi}_{i,m}}\right|+2\left|\mathbf{v}_n^H(\mathbf{\Phi})\frac{\partial^2\mathbf{v}_n(\mathbf{\Phi}) }{\partial \bar{\phi}_{i,m}^2}\right|.
\end{split}
\end{equation}
Through similar derivations to \eqref{lipschitz phi}, we have
\begin{equation}\label{bound phi n}
\left|\frac{\partial^2 f_n(\mathbf{\Phi})}{\partial\bar{\phi}_{i,m}^2} \right| \leq 4(M-1)(N+1),
\end{equation}
which results in gradients $\nabla f_n(\mathbf{\Phi}), n\in\mathcal{{T}}$ being Lipschitz continuous with constant $L_n > 4(M-1)(N+1)$.  $\hfill\blacksquare$

\section{Proofs of Several Lemmas for the Proposed Consensus-ADMM algorithm}
\begin{lemma}\label{upperbound function}
For the upper-bounded function $\mathcal{U}_n\!\left(\!\mathbf{\Phi}^{k}, \mathbf{\Phi}_n,\mathbf{\Lambda}_n^k\right)$ defined in \eqref{upperbound Un}, we have the following inequality
\begin{equation}\label{difference Un}
\begin{split}
&\mathcal{U}_n\left(\mathbf{\Phi}^{k+1}, \mathbf{\Phi}_n,\mathbf{\Lambda}_n^k\right)-\mathcal{L}_n(\mathbf{\Phi}^{k+1}, \mathbf{\Phi}_n, \mathbf{\Lambda}_n^k)\\
\leq&\ 2L_n\|{\mathbf{\Phi}_n}-{\mathbf{\Phi}^{k+1}}\|_F^2, \ \ \forall n\in \mathcal{{T}}.
\end{split}
\end{equation}
\end{lemma}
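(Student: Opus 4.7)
The plan is to prove the inequality by direct subtraction and then invoke the standard descent lemma for functions with Lipschitz-continuous gradients, which Lemma~1 has already established for $f_n$.

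First I would expand the two expressions \eqref{Ln ADMM} and \eqref{upperbound Un} with $\mathbf{\Phi}$ replaced by $\mathbf{\Phi}^{k+1}$ and subtract term by term. The linear multiplier contribution $\langle \mathbf{\Lambda}_n^k, \mathbf{\Phi}_n - \mathbf{\Phi}^{k+1}\rangle$ appears in both, as does the penalty term $\tfrac{\rho_n}{2}\|\mathbf{\Phi}_n - \mathbf{\Phi}^{k+1}\|_F^2$, so both cancel. What remains is
\begin{equation*}
\mathcal{U}_n - \mathcal{L}_n \;=\; f_n(\mathbf{\Phi}^{k+1}) - f_n(\mathbf{\Phi}_n) + \bigl\langle \nabla f_n(\mathbf{\Phi}^{k+1}), \mathbf{\Phi}_n - \mathbf{\Phi}^{k+1}\bigr\rangle + \frac{L_n}{2}\|\mathbf{\Phi}_n - \mathbf{\Phi}^{k+1}\|_F^2.
\end{equation*}

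Second, I would apply the descent lemma that follows from Lemma~1: since $\nabla f_n$ is $L_n$-Lipschitz, for any two points we have
\begin{equation*}
\bigl|f_n(\mathbf{\Phi}_n) - f_n(\mathbf{\Phi}^{k+1}) - \langle \nabla f_n(\mathbf{\Phi}^{k+1}), \mathbf{\Phi}_n - \mathbf{\Phi}^{k+1}\rangle\bigr| \;\leq\; \tfrac{L_n}{2}\|\mathbf{\Phi}_n - \mathbf{\Phi}^{k+1}\|_F^2,
\end{equation*}
which in particular implies
\begin{equation*}
f_n(\mathbf{\Phi}^{k+1}) - f_n(\mathbf{\Phi}_n) + \bigl\langle \nabla f_n(\mathbf{\Phi}^{k+1}), \mathbf{\Phi}_n - \mathbf{\Phi}^{k+1}\bigr\rangle \;\leq\; \tfrac{L_n}{2}\|\mathbf{\Phi}_n - \mathbf{\Phi}^{k+1}\|_F^2.
\end{equation*}
Substituting into the expression for $\mathcal{U}_n - \mathcal{L}_n$ yields an upper bound of $L_n\|\mathbf{\Phi}_n - \mathbf{\Phi}^{k+1}\|_F^2$, which is certainly majorized by the claimed $2L_n\|\mathbf{\Phi}_n - \mathbf{\Phi}^{k+1}\|_F^2$.

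Honestly, there is no real obstacle here: the inequality is essentially a rephrasing of the descent lemma once the matching Lagrangian and penalty terms are cancelled. The only subtlety worth flagging is to verify that the Lipschitz estimate from Lemma~1, which was derived in terms of the real phase variables $\mathbf{\Phi}$, supports the standard quadratic upper/lower bound on $f_n$ via the integral form $f_n(\mathbf{\Phi}_n) - f_n(\mathbf{\Phi}^{k+1}) - \langle \nabla f_n(\mathbf{\Phi}^{k+1}), \mathbf{\Phi}_n - \mathbf{\Phi}^{k+1}\rangle = \int_0^1 \langle \nabla f_n(\mathbf{\Phi}^{k+1} + t(\mathbf{\Phi}_n - \mathbf{\Phi}^{k+1})) - \nabla f_n(\mathbf{\Phi}^{k+1}), \mathbf{\Phi}_n - \mathbf{\Phi}^{k+1}\rangle\, dt$, bounded by Cauchy–Schwarz and the Lipschitz property. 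Once that routine step is acknowledged, the conclusion is immediate.
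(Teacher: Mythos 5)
Your proposal is correct, and the first half is identical to the paper's: both expand $\mathcal{U}_n-\mathcal{L}_n$, cancel the common multiplier and penalty terms, and are left with $f_n(\mathbf{\Phi}^{k+1})-f_n(\mathbf{\Phi}_n)+\langle\nabla f_n(\mathbf{\Phi}^{k+1}),\mathbf{\Phi}_n-\mathbf{\Phi}^{k+1}\rangle+\tfrac{L_n}{2}\|\mathbf{\Phi}_n-\mathbf{\Phi}^{k+1}\|_F^2$. The two arguments then diverge slightly in how this remainder is bounded. The paper applies the one-sided descent lemma centered at $\mathbf{\Phi}_n$ to control $f_n(\mathbf{\Phi}^{k+1})-f_n(\mathbf{\Phi}_n)$, which leaves a term $\langle\nabla f_n(\mathbf{\Phi}^{k+1})-\nabla f_n(\mathbf{\Phi}_n),\mathbf{\Phi}^{k+1}-\mathbf{\Phi}_n\rangle$ that it then bounds by Cauchy--Schwarz and the Lipschitz property, accumulating the constant $2L_n$. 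You instead invoke the two-sided quadratic bound centered at $\mathbf{\Phi}^{k+1}$ (whose lower-bound half is exactly what is needed), which kills the gradient-difference term entirely and yields the tighter estimate $L_n\|\mathbf{\Phi}_n-\mathbf{\Phi}^{k+1}\|_F^2$, trivially majorized by the stated $2L_n\|\mathbf{\Phi}_n-\mathbf{\Phi}^{k+1}\|_F^2$. Both routes rest only on Lemma~1; yours is marginally cleaner and shows the lemma's constant is not sharp, though this has no downstream effect since the paper only needs some fixed multiple of $L_n$. Your closing caveat about justifying the two-sided bound via the integral form of the fundamental theorem of calculus is the right thing to check, and it goes through for the real phase variables exactly as you describe.
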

\begin{proof} Based on \eqref{upperbound Un}, we have
\begin{equation}\label{difference phi}
\begin{split}
\!\!\!\! &\!\mathcal{U}_n\left(\mathbf{\Phi}^{k+1}, \mathbf{\Phi}_n,\mathbf{\Lambda}_n^k\right)-\mathcal{L}_n(\mathbf{\Phi}^{k+1}, \mathbf{\Phi}_n, \mathbf{\Lambda}_n^k)\\
\!\!=& f_n(\mathbf{\Phi}^{k+1})\!-\!f_n(\mathbf{\Phi}_n)\!+\!\langle \nabla f_n( {\mathbf{\Phi}}^{k+1}),\mathbf\Phi_n\!-\!\mathbf\Phi^{k+1}\rangle \\
\!\!+&\frac{L_n}{2}\|{\mathbf{\Phi}}_n\!-\!\!{\mathbf{\Phi}}^{k+1}\|_F^2.
\end{split}
\end{equation}
Since $\nabla f_n(\mathbf{\Phi})$ is Lipschitz continuous, there exists
\begin{equation}\label{lipschitz property}
\begin{split}
&f_n(\mathbf{\Phi}^{k+1})-f_n(\mathbf{\Phi}_n)   \\
\leq &\ \langle \nabla f_n({\mathbf{\Phi}}_n),{\mathbf{\Phi}^{k+1}}-{\mathbf{\Phi}_n}\rangle+\frac{L_n}{2}\|{\mathbf{\Phi}}^{k+1}-{\mathbf{\Phi}}_n\|_F^2.
\end{split}
\end{equation}
Plugging the above inequality into \eqref{difference phi}, we can get
\begin{equation}\label{diff phi}
\begin{split}
\hspace{-0.3cm}&\ \ \mathcal{U}_n\left(\mathbf{\Phi}^{k+1}, \mathbf{\Phi}_n,\mathbf{\Lambda}_n^k\right)-\mathcal{L}_n(\mathbf{\Phi}^{k+1}, \mathbf{\Phi}_n, \mathbf{\Lambda}_n^k)\\
\hspace{-0.3cm}\leq& \langle \nabla f_n(\mathbf{\Phi}^{k+1})-\nabla f_n(\mathbf{\Phi}_n),{\mathbf{\Phi}}^{k+1}-{\mathbf{\Phi}}_n\rangle \\
&+L_n\|{\mathbf{\Phi}^{k+1}-{\mathbf{\Phi}}_n}\|_F^2.
\end{split}
\end{equation}
Furthermore, according to Lemma \ref{Lipschtiz continuous}, there exists
\[
 \begin{split}
   &\langle \nabla f_n(\mathbf{\Phi}^{k+1})\!- \!\nabla f_n( \mathbf{\Phi}_n),{\mathbf{\Phi}}^{k+1}\!-\!{\mathbf{\Phi}_n}\rangle
   \leq L_n\|\mathbf{\Phi}^{k+1}\!-\!\mathbf{\Phi}_n\|^2_F.\\
 \end{split}
\]
Plugging it into \eqref{diff phi}, we can get \eqref{difference Un}. This completes the proof.
$\hfill\blacksquare$
\end{proof}

\begin{lemma}\label{lemma difference lambda}
In each consensus-ADMM iteration, $\forall n\in \mathcal{{T}} $, $\|{\bf\Lambda}_n^{k+1}-{\bf\Lambda}_n^k\|_F^2$ is upper-bounded as
\begin{equation}\label{diff lambda ADMM}
\begin{split}
\hspace{-0.3cm}\|{\mathbf{\Lambda}_n^{k+1}}\!\!-\!\!{\mathbf{\Lambda}_n^k}\|_F^2 \!\!\leq\! 2L_n^2\!\left(2\|{\mathbf{\Phi}_n^{k+1}}\!\!\!-\!{\mathbf{\Phi}_n^k}\|_F^2\!\!+\!3\|{\mathbf{\Phi}^{k+1}}\!\!-\!\!{\mathbf{\Phi}^{k}}\|_F^2\!\right).
\end{split}
\end{equation}
\end{lemma}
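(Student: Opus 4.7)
The plan is to eliminate the dual variables from $\mathbf{\Lambda}_n^{k+1}-\mathbf{\Lambda}_n^{k}$ by expressing them purely in terms of the primal iterates and $\nabla f_n$ via the optimality conditions of the $\mathbf{\Phi}_n$-subproblem, and then to apply Lemma~\ref{Lipschtiz continuous} together with a couple of Young-type inequalities. The first move is to set $\nabla_{\mathbf{\Phi}_n}\mathcal{U}_n(\mathbf{\Phi}^{k+1},\mathbf{\Phi}_n^{k+1},\mathbf{\Lambda}_n^{k})=0$, which, using the definition in \eqref{upperbound Un}, gives $\nabla f_n(\mathbf{\Phi}^{k+1})+\mathbf{\Lambda}_n^{k}+(\rho_n+L_n)(\mathbf{\Phi}_n^{k+1}-\mathbf{\Phi}^{k+1})=0$. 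Substituting this identity into the dual step \eqref{step3 ADMM} cancels the $\rho_n(\mathbf{\Phi}_n^{k+1}-\mathbf{\Phi}^{k+1})$ piece against the matching term in the KKT relation, leaving the clean closed form $\mathbf{\Lambda}_n^{k+1}=-\nabla f_n(\mathbf{\Phi}^{k+1})-L_n(\mathbf{\Phi}_n^{k+1}-\mathbf{\Phi}^{k+1})$, with the analogous relation $\mathbf{\Lambda}_n^{k}=-\nabla f_n(\mathbf{\Phi}^{k})-L_n(\mathbf{\Phi}_n^{k}-\mathbf{\Phi}^{k})$ holding at the previous iteration.

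Subtracting these two identities yields $\mathbf{\Lambda}_n^{k+1}-\mathbf{\Lambda}_n^{k}=-\bigl(\nabla f_n(\mathbf{\Phi}^{k+1})-\nabla f_n(\mathbf{\Phi}^{k})\bigr)+L_n\bigl((\mathbf{\Phi}^{k+1}-\mathbf{\Phi}^{k})-(\mathbf{\Phi}_n^{k+1}-\mathbf{\Phi}_n^{k})\bigr)$. From here the bound is essentially mechanical: I would apply $\|a+b\|_F^2\le 2\|a\|_F^2+2\|b\|_F^2$ to separate the gradient-difference from the $L_n$-weighted displacement difference, invoke Lemma~\ref{Lipschtiz continuous} to replace $\|\nabla f_n(\mathbf{\Phi}^{k+1})-\nabla f_n(\mathbf{\Phi}^{k})\|_F^2$ by $L_n^2\|\mathbf{\Phi}^{k+1}-\mathbf{\Phi}^{k}\|_F^2$, and apply $\|u-v\|_F^2\le 2\|u\|_F^2+2\|v\|_F^2$ once more to the remaining term. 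Collecting the coefficients produces $6L_n^2\|\mathbf{\Phi}^{k+1}-\mathbf{\Phi}^{k}\|_F^2+4L_n^2\|\mathbf{\Phi}_n^{k+1}-\mathbf{\Phi}_n^{k}\|_F^2$, which is exactly the bound \eqref{diff lambda ADMM}.

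The main obstacle I anticipate is algebraic rather than conceptual: the identity for $\mathbf{\Lambda}_n^{k+1}$ only collapses to something involving $L_n$ alone because the $\rho_n$ in the dual step and the $\rho_n+L_n$ coefficient in the majorizer conspire to cancel, so one must carry out the derivation with the \emph{exact} majorizer $\mathcal{U}_n$ in \eqref{upperbound Un} and not with the unmodified augmented Lagrangian $\mathcal{L}_n$. A secondary subtlety is the grouping in the two applications of Young's inequality: bundling the gradient-difference with its Lipschitz estimate and treating the displacement difference as a single vector before splitting is what delivers the sharper $(2,3)$ coefficients in the statement, whereas the naive three-term bound $\|a+b+c\|_F^2\le 3(\|a\|_F^2+\|b\|_F^2+\|c\|_F^2)$ would only produce $(3,3)$ and miss the claimed constant.
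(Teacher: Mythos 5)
Your proposal is correct and follows essentially the same route as the paper: derive the closed form $\mathbf{\Lambda}_n^{k+1}=-\nabla f_n(\mathbf{\Phi}^{k+1})-L_n(\mathbf{\Phi}_n^{k+1}-\mathbf{\Phi}^{k+1})$ from the optimality condition of the $\mathcal{U}_n$-subproblem combined with the dual update, subtract consecutive iterations, and apply the two-term Young inequality together with the Lipschitz bound of Lemma~\ref{Lipschtiz continuous}. The grouping you describe is exactly the one the paper uses to obtain the $(2,3)$ coefficients.
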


\begin{proof}
The optimal solutions of problems \eqref{step2 ADMM} can be obtained by solving the linear equations $\nabla_{\mathbf{\Phi}_n}\mathcal{U}_n\left(\mathbf{\Phi}^{k+1},\mathbf{\Phi}_n,\mathbf{\Lambda}_n^k\right)=0$, $\forall\ n\in \mathcal{{T}}$, i.e.,
\begin{equation}\label{solve admm b}
\begin{split}
\nabla f_n({\mathbf{\Phi}}^{k+1})+ \mathbf{\Lambda}_n^k+(\rho_n+L_n)(\mathbf{\Phi}_n^{k+1}-\mathbf{\Phi}^{k+1})=0.
\end{split}
\end{equation}
Combining it with \eqref{step3 ADMM}, we can get
\begin{equation}\label{lambda update}
\begin{split}
\!\!\!\!\!\mathbf{\Lambda}_n^{k+1}=-\nabla f_n({\mathbf{\Phi}}^{k+1})- L_n(\mathbf{\Phi}_n^{k+1}-\mathbf{\Phi}^{k+1}).
\end{split}
\end{equation}
Plugging \eqref{lambda update} into  $\|\mathbf{\Lambda}_n^{k+1}\!-\!\mathbf{\Lambda}_n^{k}\|_F^2$, we have the following derivations
\[
\begin{split}
\!\!\!&\|\mathbf{\Lambda}_n^{k+1}-\mathbf{\Lambda}_n^{k}\|_F^2\\
\!\!\!=&\|\nabla f_n(\!{\mathbf{\Phi}^{k+1}}\!)\!-\!\nabla f_n({\mathbf{\Phi}^{k}})\!+\!L_n(\mathbf{\Phi}_n^{k+1} \!-\!\mathbf{\Phi}^{k+1}\!-\!\mathbf{\Phi}_n^{k}\!+\!\mathbf{\Phi}^{k})\|_F^2\\
\!\!\!{\leq}&2\|\!\nabla\! f_n(\!{\mathbf{\Phi}^{k+1}}\!)\!-\!\!\nabla \!f_n(\!{\mathbf{\Phi}^{k}})\!\|_F^2\!+\!\!2L_n^2\|\!\mathbf{\Phi}_n^{k+1} \!-\!\mathbf{\Phi}_n^{k}\!-\!\mathbf{\Phi}^{k+1}\!+\!\mathbf{\Phi}^{k}\!\|_F^2\\
\!\!\!{\leq}&\ 2L_n^2(2\|{\mathbf{\Phi}_n^{k+1}}-{\mathbf{\Phi}_n^{k}}\|_F^2+3\|{\mathbf{\Phi}^{k+1}}-{\mathbf{\Phi}^{k}}\|_F^2),
\end{split}
\]
where the second inequality comes from the Lipschitz continuity of function $\nabla\! f_n(\mathbf{\Phi})$. This completes the proof. $\hfill\blacksquare$
\end{proof}

\begin{lemma}\label{successive difference}
In each consensus-ADMM iteration, if
\begin{equation}\label{cn}
\begin{split}
&\bar{c}_{n}\!=\!\rho_n^3\!-\!7\rho_n^2L_n\!-\!8\rho_nL_n^2\!-\!32L_n^3\geq0 ,\\
& \tilde{c}_{n}\!=\! \rho_n^3\!-\!12\rho_nL_n^2\!-\!48L_n^3\geq0.
\end{split}
\end{equation}
then, $\mathcal{L}(\mathbf{\Phi}^k\!, \!\{\mathbf{\Phi}_n^k,\mathbf{\Lambda}_n^k, \! n\in\mathcal{{T}}\})$ {\it decreases sufficiently}, i.e.,
\begin{equation}\label{succ-diff phi}
\begin{split}
\hspace{-0.4cm}& \ \ \mathcal{L}(\mathbf{\Phi}^k\!, \!\{\mathbf{\Phi}_n^k,\mathbf{\Lambda}_n^k, \! n\in\mathcal{{T}}\})\!-\!\mathcal{L}(\mathbf{\Phi}^{k\!+\!1},\!\{\mathbf{\Phi}_n^{k\!+\!1}\!,\!\mathbf{\Lambda}_n^{k\!+\!1}\!,\! n\in\mathcal{{T}}\})\\
\hspace{-0.4cm}&\geq \!\sum_{n\in \mathcal{{T}}}\frac{1}{2\rho_n^2}\left( \bar{c}_{n}\|{\mathbf{\Phi}_n^{k+1}}\!-\!{\mathbf{\Phi}_n^k}\|_F^2\!+\!\tilde{c}_{n} \|{\mathbf{\Phi}^{k+1}}\!-\!{\mathbf{\Phi}^{k}}\|_F^2 \right),
\end{split}
\end{equation}
\end{lemma}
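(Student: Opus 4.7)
The plan is to split the single-iteration Lagrangian drop into the three block updates, lower bound each piece, eliminate the one cross term that is not already a successive difference, and collect coefficients to identify $\bar{c}_n$ and $\tilde{c}_n$.

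For the decomposition I would insert $\mathcal{L}(\mathbf{\Phi}^{k+1},\{\mathbf{\Phi}_n^k,\mathbf{\Lambda}_n^k\})$ and $\mathcal{L}(\mathbf{\Phi}^{k+1},\{\mathbf{\Phi}_n^{k+1},\mathbf{\Lambda}_n^k\})$ as intermediaries so that $\mathcal{L}^k - \mathcal{L}^{k+1}$ becomes the sum of (a) the change from $\mathbf{\Phi}^k \to \mathbf{\Phi}^{k+1}$, (b) $\sum_n$ of the change from $\mathbf{\Phi}_n^k \to \mathbf{\Phi}_n^{k+1}$, and (c) $\sum_n$ of the change from $\mathbf{\Lambda}_n^k \to \mathbf{\Lambda}_n^{k+1}$. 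For piece (a), $\mathcal{L}(\cdot,\{\mathbf{\Phi}_n^k,\mathbf{\Lambda}_n^k\})$ is $(\sum_n \rho_n)$-strongly convex in $\mathbf{\Phi}$ because of the sum of penalty terms, so constrained minimality of $\mathbf{\Phi}^{k+1}$ gives a drop of at least $\frac{\sum_n \rho_n}{2}\|\mathbf{\Phi}^{k+1}-\mathbf{\Phi}^k\|_F^2$. For piece (b), $\mathbf{\Phi}_n^{k+1}$ minimizes the $(\rho_n+L_n)$-strongly convex majorizer $\mathcal{U}_n(\mathbf{\Phi}^{k+1},\cdot,\mathbf{\Lambda}_n^k)$, which provides a drop of at least $\frac{\rho_n+L_n}{2}\|\mathbf{\Phi}_n^{k+1}-\mathbf{\Phi}_n^k\|_F^2$ in the majorizer; combining with the sandwich $0 \leq \mathcal{U}_n - \mathcal{L}_n \leq 2L_n\|\mathbf{\Phi}_n-\mathbf{\Phi}^{k+1}\|_F^2$ (the upper bound is the descent lemma, the lower bound is Lemma \ref{upperbound function}) converts this into a drop in $\mathcal{L}_n$, at the price of a slack term $-2L_n\|\mathbf{\Phi}_n^k-\mathbf{\Phi}^{k+1}\|_F^2$. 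For piece (c), the dual update $\mathbf{\Lambda}_n^{k+1}-\mathbf{\Lambda}_n^k = \rho_n(\mathbf{\Phi}_n^{k+1}-\mathbf{\Phi}^{k+1})$ produces an exact rise of $\frac{1}{\rho_n}\|\mathbf{\Lambda}_n^{k+1}-\mathbf{\Lambda}_n^k\|_F^2$, which Lemma \ref{lemma difference lambda} bounds directly in terms of the two desired successive differences.

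The main obstacle is the residual $\|\mathbf{\Phi}_n^k-\mathbf{\Phi}^{k+1}\|_F^2$ produced in piece (b), which is neither of the targeted successive differences. I would dispose of it via the algebraic identity $\mathbf{\Phi}_n^k-\mathbf{\Phi}^{k+1} = -(\mathbf{\Phi}_n^{k+1}-\mathbf{\Phi}_n^k) + \rho_n^{-1}(\mathbf{\Lambda}_n^{k+1}-\mathbf{\Lambda}_n^k)$ read off from the dual update, followed by $\|u+v\|_F^2 \leq 2\|u\|_F^2 + 2\|v\|_F^2$ and one more application of Lemma \ref{lemma difference lambda}; this rewrites the residual purely in terms of $\|\mathbf{\Phi}_n^{k+1}-\mathbf{\Phi}_n^k\|_F^2$ and $\|\mathbf{\Phi}^{k+1}-\mathbf{\Phi}^k\|_F^2$. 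Finally, summing all contributions over $n$, grouping by the two successive differences, and multiplying through by $2\rho_n^2$, a direct calculation collapses the coefficients into exactly $\bar{c}_n$ and $\tilde{c}_n$ of \eqref{cn}; their assumed nonnegativity then yields \eqref{succ-diff phi}.
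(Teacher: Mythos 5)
Your proposal follows the paper's proof essentially step for step: the same three-way telescoping decomposition through $\mathcal{L}(\mathbf{\Phi}^{k+1},\{\mathbf{\Phi}_n^k,\mathbf{\Lambda}_n^k\})$ and $\mathcal{L}_n(\mathbf{\Phi}^{k+1},\mathbf{\Phi}_n^{k+1},\mathbf{\Lambda}_n^k)$, the same strong-convexity bounds for the two primal blocks, the same use of Lemma \ref{upperbound function} to pass between $\mathcal{U}_n$ and $\mathcal{L}_n$, and the same double application of Lemma \ref{lemma difference lambda} to eliminate both the residual $\|\mathbf{\Phi}_n^k-\mathbf{\Phi}^{k+1}\|_F^2$ and the exact dual increase $\frac{1}{\rho_n}\|\mathbf{\Lambda}_n^{k+1}-\mathbf{\Lambda}_n^k\|_F^2$, with the coefficients collapsing to $\bar{c}_n$ and $\tilde{c}_n$ exactly as you claim. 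The only nit is that your parenthetical attributions in the sandwich are swapped --- the lower bound $0\leq\mathcal{U}_n-\mathcal{L}_n$ is the descent lemma and the upper bound $\mathcal{U}_n-\mathcal{L}_n\leq 2L_n\|\mathbf{\Phi}_n-\mathbf{\Phi}^{k+1}\|_F^2$ is Lemma \ref{upperbound function} --- but both facts are correct and correctly deployed.
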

\begin{proof}
To facilitate the subsequent derivations, we define the following quantities
\[
  \begin{split}
    &\Delta_{\mathbf{\Phi}}^k\!=\!\mathcal{L}(\mathbf{\Phi}^{k}\!, \{\mathbf{\Phi}_n^k,\mathbf{\Lambda}_n^k, n\in\mathcal{{T}}\})-\mathcal{L}(\mathbf{\Phi}^{k+1}\!,\!\{\mathbf{\Phi}_n^{k},\!\mathbf{\Lambda}_n^{k},n\in\mathcal{{T}}\}), \\
    &\Delta_{\mathbf{\Phi}_n}^k\!\!=\!\mathcal{L}_n(\mathbf{\Phi}^{k+1}\!, \mathbf{\Phi}_n^k,\mathbf{\Lambda}_n^k)-\!\mathcal{L}_n(\mathbf{\Phi}^{k+1}\!,\!\mathbf{\Phi}_n^{k+1}\!,\!\mathbf{\Lambda}_n^{k}),\\
    &\Delta_{\mathbf{\Lambda}_n}^k \!\!=\!\mathcal{L}_n(\mathbf{\Phi}^{k+1}\!, \!\mathbf{\Phi}_n^{k+1}\!,\!\mathbf{\Lambda}_n^k)\!-\!\mathcal{L}_n(\mathbf{\Phi}^{k+1}\!,\mathbf{\Phi}_n^{k+1}\!,\!\mathbf{\Lambda}_n^{k+1}).
  \end{split}
\]
Then, from the above quantities, we have
\begin{equation}\label{diff L}
  \begin{split}
\hspace{-0.2cm}& \mathcal{L}(\mathbf{\Phi}^k\!, \!\{\mathbf{\Phi}_n^k,\!\mathbf{\Lambda}_n^k, \! n\!\in\!\mathcal{{T}}\})\!\!-\!\!\mathcal{L}(\mathbf{\Phi}^{k+1},\!\{\mathbf{\Phi}_n^{k+1}\!,\!\mathbf{\Lambda}_n^{k+1}\!,\! n\!\in\!\mathcal{{T}}\})\\
\hspace{-0.2cm}=& \Delta_{\mathbf{\Phi}}^k + \displaystyle\sum_{n\in\mathcal{{T}}} \left( \Delta_{\mathbf{\Phi}_n}^k+\Delta_{\mathbf{\Lambda}_n}^k\right).
  \end{split}
\end{equation}
Since $\mathcal{L}\left(\mathbf{\Phi},\{\mathbf{\Phi}_n^{k},\mathbf{\Lambda}_n^{k}, n\in\mathcal{{T}}\}\right)$ with respect to $\mathbf{\Phi}$ is strongly convex, $\Delta_{\mathbf{\Phi}}^k$ should satisfy the following inequality
\begin{equation}\label{delta phi 1}
\begin{split}
\Delta_{\mathbf{\Phi}}^k&\geq\left\langle\nabla_{\mathbf{\Phi}}\mathcal{L}(\mathbf{\Phi}^{k+1}\!,\!\{\mathbf{\Phi}_n^{k},\!\mathbf{\Lambda}_n^{k},n\in\mathcal{{T}}\}),\mathbf{\Phi}^k-\mathbf{\Phi}^{k+1}\right\rangle\\
&\ \ +\sum\limits_{n\in \mathcal{{T}}}\frac{\rho_n}{2}\|\mathbf{\Phi}^{k+1}\!-\!\mathbf{\Phi}^{k}\|_F^2.
\end{split}
\end{equation}
Moreover, since
$\mathbf{\Phi}^{k\!+\!1}\!\!=\!\!\underset{0\preceq\mathbf{\Phi}\prec2\pi} {\arg \min} \mathcal{L}\!\left(\!\mathbf{\Phi},  \{\mathbf{\Phi}_n^k,\mathbf{\Lambda}_n^k,n\!\in\!\mathcal{T}\}\!\right)$, there exists
\[
\left\langle\nabla_{\mathbf{\Phi}}\mathcal{L}(\mathbf{\Phi}^{k+1}\!,\!\{\mathbf{\Phi}_n^{k},\!\mathbf{\Lambda}_n^{k},n\in\mathcal{{T}}\}),\mathbf{\Phi}^k-\mathbf{\Phi}^{k+1}\right\rangle\geq0.
\]
Plugging it into \eqref{delta phi 1}, we can obtain
\begin{equation}\label{delta alpha phi 1}
\begin{split}
\Delta_{\mathbf{\Phi}}^k&\geq\sum\limits_{n\in \mathcal{{T}}}\frac{\rho_n}{2}\|\mathbf{\Phi}^{k+1}\!-\!\mathbf{\Phi}^{k}\|_F^2.
\end{split}
\end{equation}

Similarly, since $\mathcal{L}_n(\mathbf{\Phi}^{k+1}, \mathbf{\Phi}_n,\mathbf{\Lambda}_n^k)$ is strongly convex with respect to $\mathbf{\Phi}_n$, $\Delta_{\mathbf{\Phi}_n}^k$ should satisfy
\begin{equation}\label{delta phi n 1}
\begin{split}
\Delta_{\mathbf{\Phi}_n}^k&\geq\mathcal{L}_n(\mathbf{\Phi}^{k+1}\!, \mathbf{\Phi}_n^k,\mathbf{\Lambda}_n^k)-\mathcal{U}_n(\mathbf{\Phi}^{k+1}\!,\mathbf{\Phi}_n^{k+1},\mathbf{\Lambda}_n^{k}).
\end{split}
\end{equation}
Moreover, according to Lemma \ref{upperbound function} and the strong convexity of the functions $\mathcal{U}_n\left(\mathbf{\Phi}^{k+1},\mathbf{\Phi}_n,\mathbf{\Lambda}_n^{k}\right)$, $n\in\mathcal{{T}}$ with respect to $\mathbf{\Phi}_n$, we have the following two inequalities respectively
\[
  \begin{split}
  &\mathcal{L}_n(\mathbf{\Phi}^{k+1}\!, \!\mathbf{\Phi}_n^k,\!\mathbf{\Lambda}_n^k)\!-\!\mathcal{U}_n(\mathbf{\Phi}^{k+1}\!,\mathbf{\Phi}_n^{k},\mathbf{\Lambda}_n^{k})
  \!\!\geq\!\! -2L_n\!\|{\mathbf{\Phi}_n^k}\!-\!{\mathbf{\Phi}^{k+1}}\|_F^2, \\
  &\mathcal{U}_n(\mathbf{\Phi}^{k+1}\!\!, \!\mathbf{\Phi}_n^k,\!\mathbf{\Lambda}_n^k)\!\!-\!\!\mathcal{U}_n(\mathbf{\Phi}^{k+1}\!\!,\!\mathbf{\Phi}_n^{k+1}\!\!,\!\mathbf{\Lambda}_n^{k}) \!\!\geq\!\!\frac{\rho_n\!+\!L_n}{2}\|\mathbf{\Phi}_n^{k+1}\!\!\!-\!\!\mathbf{\Phi}_n^{k}\|_F^2.
  \end{split}
\]
Plugging them into \eqref{delta phi n 1}, it can be changed to
\begin{equation}
\begin{split}
\!\!\!\!\Delta_{\mathbf{\Phi}_n}^k&\!\geq\!-2L_n\!\|{\mathbf{\Phi}_n^k}\!-\!{\mathbf{\Phi}^{k+1}}\|_F^2\!+\!\frac{\rho_n\!+\!L_n}{2}\|\mathbf{\Phi}_n^{k+1}\!-\!\mathbf{\Phi}_n^{k}\|_F^2,\\
\vspace{-20pt}&\!\geq\!-4L_n\!\|{\mathbf{\Phi}_n^{k+1}}\!\!-\!{\mathbf{\Phi}^{k+1}}\|_F^2\!+\!\frac{\rho_n\!\!-\!7\!L_n}{2}\|\mathbf{\Phi}_n^{k+1}\!\!-\!\!\mathbf{\Phi}_n^{k}\|_F^2.\\
\end{split}
\end{equation}
Since $\mathbf{\Lambda}_n^{k+1} \!-\! \mathbf{\Lambda}_n^{k} = \rho_n(\mathbf{\Phi}_n^{k+1}\! -\mathbf{\Phi}^{k+1})$, the above inequality can be rewritten as
\begin{equation}\label{delta phi n}
\!\Delta_{\mathbf{\Phi}_n}^k\!\geq\!\frac{-4L_n}{\rho_n^2}\!\|\mathbf{\Lambda}_n^{k+1}-\mathbf{\Lambda}_n^{k}\|_F^2\!+\!\frac{\rho_n\!-7\!L_n}{2}\|\mathbf{\Phi}_n^{k+1}\!-\!\mathbf{\Phi}_n^{k}\|_F^2.
\end{equation}
Furthermore, plugging \eqref{diff lambda ADMM} into \eqref{delta phi n}, it can be derived as
\begin{equation}\label{bound delta phi n}
\!\Delta_{\mathbf{\Phi}_n}^k\!\!\!\geq\!\!\frac{\rho_n^3\!\!-\!\!7\rho_n^2L_n\!\!-\!\!32L_n^3}{2\rho_n^2}\!\|\mathbf{\Phi}_n^{k+1}\!\!\!-\!\mathbf{\Phi}_n^{k}\|_F^2\!
-\!\frac{24L_n^3}{\rho_n^2}\|\mathbf{\Phi}^{k+1}\!\!\!-\!\mathbf{\Phi}^{k}\|_F^2.
\end{equation}

For $\Delta_{\mathbf{\Lambda}_n}^k$, through similar derivations and the results in Lemma \ref{lemma difference lambda}, there exists
\begin{equation}\label{delta lambda}
\Delta_{\mathbf{\Lambda}_n}^k {\geq}-\frac{2L_n^2}{\rho_n}\left(2\|{\mathbf{\Phi}_n^{k+1}}-{\mathbf{\Phi}_n^{k}}\|_F^2+3\|{\mathbf{\Phi}^{k+1}}-{\mathbf{\Phi}^{k}}\|_F^2\right).
\end{equation}

Plugging \eqref{delta alpha phi 1}, \eqref{bound delta phi n}, and \eqref{delta lambda} into \eqref{diff L}, we have the following inequality
\[
  \begin{split}
\hspace{-0.4cm}& \ \ \mathcal{L}(\mathbf{\Phi}^k\!, \!\{\mathbf{\Phi}_n^k,\mathbf{\Lambda}_n^k, \! n\in\mathcal{{T}}\})\!-\!\mathcal{L}(\mathbf{\Phi}^{k\!+\!1},\!\{\mathbf{\Phi}_n^{k\!+\!1}\!,\!\mathbf{\Lambda}_n^{k\!+\!1}\!,\! n\in\mathcal{{T}}\})\\
\hspace{-0.4cm}&\geq \sum_{n\in \mathcal{{T}}}\frac{1}{2\rho_n^2}\left( \bar{c}_{n}\|{\mathbf{\Phi}_n^{k+1}}\!-\!{\mathbf{\Phi}_n^k}\|_F^2\!+\!\tilde{c}_{n} \|{\mathbf{\Phi}^{k+1}}\!-\!{\mathbf{\Phi}^{k}}\|_F^2 \right),
  \end{split}
\]
where $\bar{c}_{n}$ and $\tilde{c}_{n}$ are defined in \eqref{cn}. This completes the proof. $\hfill\blacksquare$
\end{proof}

\begin{lemma}\label{lemma lower bound}
 If $\rho_n>5L_n$, augmented Lagrangian function
 \begin{equation}\label{Lgeq0}
 \mathcal{L}(\mathbf{\Phi}^{k+1},\{\mathbf{\Phi}_n^{k+1},\mathbf{\Lambda}_n^{k+1}, n\in\mathcal{{T}}\})\geq 0 , \forall k.
 \end{equation}
 \end{lemma}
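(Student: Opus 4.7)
The plan is to use the closed-form expression for $\mathbf{\Lambda}_n^{k+1}$ obtained in the proof of Lemma \ref{lemma difference lambda} to eliminate the dual variable from $\mathcal{L}_n$, then apply the Lipschitz gradient inequality twice (once as a descent bound, once via Cauchy--Schwarz) to rewrite each $\mathcal{L}_n$ as a nonnegative quantity plus a correction controlled by $\rho_n - 5L_n$. Summing over $n\in\mathcal{T}$ will then give the claim.

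First I would recall from \eqref{lambda update} that, at the optimum of subproblem \eqref{step2 ADMM}, the multiplier satisfies $\mathbf{\Lambda}_n^{k+1}=-\nabla f_n(\mathbf{\Phi}^{k+1})-L_n(\mathbf{\Phi}_n^{k+1}-\mathbf{\Phi}^{k+1})$. Substituting this into the per-block augmented Lagrangian \eqref{Ln ADMM} evaluated at the current iterate yields
\begin{equation*}
\mathcal{L}_n(\mathbf{\Phi}^{k+1},\mathbf{\Phi}_n^{k+1},\mathbf{\Lambda}_n^{k+1}) = f_n(\mathbf{\Phi}_n^{k+1}) - \langle\nabla f_n(\mathbf{\Phi}^{k+1}),\mathbf{\Phi}_n^{k+1}-\mathbf{\Phi}^{k+1}\rangle + \Bigl(\tfrac{\rho_n}{2}-L_n\Bigr)\|\mathbf{\Phi}_n^{k+1}-\mathbf{\Phi}^{k+1}\|_F^2,
\end{equation*}
which already removes the dual variable from the expression.

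Next I would lower-bound $f_n(\mathbf{\Phi}_n^{k+1})$ by applying the Lipschitz gradient property of Lemma \ref{Lipschtiz continuous} expanded at $\mathbf{\Phi}_n^{k+1}$, giving $f_n(\mathbf{\Phi}_n^{k+1}) \geq f_n(\mathbf{\Phi}^{k+1}) + \langle\nabla f_n(\mathbf{\Phi}_n^{k+1}),\mathbf{\Phi}_n^{k+1}-\mathbf{\Phi}^{k+1}\rangle - \tfrac{L_n}{2}\|\mathbf{\Phi}_n^{k+1}-\mathbf{\Phi}^{k+1}\|_F^2$. Plugging this into the previous display produces a leftover inner product $\langle\nabla f_n(\mathbf{\Phi}_n^{k+1})-\nabla f_n(\mathbf{\Phi}^{k+1}),\mathbf{\Phi}_n^{k+1}-\mathbf{\Phi}^{k+1}\rangle$ whose absolute value is at most $L_n\|\mathbf{\Phi}_n^{k+1}-\mathbf{\Phi}^{k+1}\|_F^2$ by Cauchy--Schwarz combined with Lipschitz continuity. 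Collecting the quadratic terms then gives
\begin{equation*}
\mathcal{L}_n(\mathbf{\Phi}^{k+1},\mathbf{\Phi}_n^{k+1},\mathbf{\Lambda}_n^{k+1}) \geq f_n(\mathbf{\Phi}^{k+1}) + \frac{\rho_n-5L_n}{2}\|\mathbf{\Phi}_n^{k+1}-\mathbf{\Phi}^{k+1}\|_F^2.
\end{equation*}

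Finally, since $f_n(\mathbf{\Phi}) = \|\mathbf{X}(\mathbf{\Phi})^H\mathbf{S}_n\mathbf{X}(\mathbf{\Phi})-N\mathbf{I}\delta_n\|_F^2\geq 0$ and the hypothesis $\rho_n>5L_n$ makes the quadratic term nonnegative, each $\mathcal{L}_n(\mathbf{\Phi}^{k+1},\mathbf{\Phi}_n^{k+1},\mathbf{\Lambda}_n^{k+1})\geq 0$, and summing over $n\in\mathcal{T}$ yields $\mathcal{L}(\mathbf{\Phi}^{k+1},\{\mathbf{\Phi}_n^{k+1},\mathbf{\Lambda}_n^{k+1}\})\geq 0$. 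The main obstacle is recognizing that the gradient inner product in the rewritten $\mathcal{L}_n$ is referenced at $\mathbf{\Phi}^{k+1}$ rather than $\mathbf{\Phi}_n^{k+1}$; one has to trade gradients between the two points (costing one factor of $L_n$ via Cauchy--Schwarz) and pay an additional $L_n/2$ from the reverse descent inequality, which together force the sharper threshold $\rho_n>5L_n$ instead of the naive $\rho_n>3L_n$ one would obtain by expanding $f_n$ at $\mathbf{\Phi}^{k+1}$ directly.
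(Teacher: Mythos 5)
Your proposal is correct and follows essentially the same route as the paper's proof: eliminate $\mathbf{\Lambda}_n^{k+1}$ via the closed form \eqref{lambda update}, apply the descent inequality and a Cauchy--Schwarz/Lipschitz gradient swap between $\mathbf{\Phi}_n^{k+1}$ and $\mathbf{\Phi}^{k+1}$ to arrive at $\mathcal{L}_n\geq f_n(\mathbf{\Phi}^{k+1})+\frac{\rho_n-5L_n}{2}\|\mathbf{\Phi}_n^{k+1}-\mathbf{\Phi}^{k+1}\|_F^2$, and sum over $n$ using $f_n\geq 0$. The only quibble is your closing remark: expanding $f_n$ at $\mathbf{\Phi}^{k+1}$ directly would in fact yield the \emph{weaker} requirement $\rho_n>3L_n$ (a better result, not a "naive" one), but this does not affect the validity of your proof of the lemma as stated.
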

\begin{proof}
 First, plugging \eqref{lambda update} into $\mathcal{L}_n({\mathbf{\Phi}}^{k+1},\mathbf{\Phi}_n^{k+1},\mathbf{\Lambda}_n^{k+1})$, it can be written as
\begin{equation}\label{lag bound}
\begin{split}
&\mathcal{L}_n\left(\mathbf{\Phi}^{k+1}, \mathbf{\Phi}_n^{k+1},\mathbf{\Lambda}_n^{k+1}\right) \\
=&  f_n({\mathbf{\Phi}_n^{k+1}})+(\frac{\rho_n}{2}-L_n)\|\mathbf{\Phi}_n^{k+1}-\mathbf{\Phi}^{k+1}\|_F^2 \\
&\hspace{2.8cm}+\langle\nabla  f_n({\mathbf{\Phi}}^{k+1}), \mathbf{\Phi}^{k+1}-\mathbf{\Phi}_n^{k+1}\rangle.
\end{split}
\end{equation}
Since $\|\nabla f_n({\mathbf{\Phi}}^{k+1})\!-\!\nabla  f_n({\mathbf{\Phi}}_n^{k+1})\|_F\!\leq\!L_n\|{\mathbf{\Phi}}_n^{k+1}\!-\!{\mathbf{\Phi}}^{k+1}\|_F$, we further have the following inequality
\[
  \begin{split}
    &\langle\nabla  f_n({\mathbf{\Phi}}^{k+1}), \mathbf{\Phi}^{k+1}-\mathbf{\Phi}_n^{k+1}\rangle \\
    \geq& \langle\nabla f_n({\mathbf{\Phi}}_n^{k+\!1}), \mathbf{\Phi}^{k+1}-\mathbf{\Phi}_n^{k+1}\rangle -L_n\|\mathbf{\Phi}_n^{k+1}-\mathbf{\Phi}^{k+1}\|^2_F.
  \end{split}
\]
Replacing the last term in \eqref{lag bound} with the above inequality, we can obtain
\begin{equation}\label{lag bound 2}
\begin{split}
& \mathcal{L}_n\left(\mathbf{\Phi}^{k+1}, \mathbf{\Phi}_n^{k+1},\mathbf{\Lambda}_n^{k+1}\right) \\
\geq& f_n({\mathbf{\Phi}_n^{k+1}})+\langle\nabla f_n({\mathbf{\Phi}}_n^{k+\!1}), \mathbf{\Phi}^{k+1}-\mathbf{\Phi}_n^{k+1}\rangle \\
&\hspace{2.5cm} +(\frac{\rho_n}{2}-2L_n)\|\mathbf{\Phi}_n^{k+1}-\mathbf{\Phi}^{k+1}\|_F^2.
\end{split}
\end{equation}
Since $\nabla f_n(\mathbf{\Phi}_n)$ is Lipschitz continuous, there exists
\[
  \begin{split}
  &f_n({\mathbf{\Phi}^{k+1}})\leq f_n({\mathbf{\Phi}_n^{k+1}})+\langle\nabla f_n({\mathbf{\Phi}}_n^{k+\!1}), \mathbf{\Phi}^{k+1}-\mathbf{\Phi}_n^{k+1}\rangle \\
  &\hspace{4.4cm}+ \frac{L_n}{2}\|\mathbf{\Phi}_n^{k+1}-\mathbf{\Phi}^{k+1}\|_F^2.
  \end{split}
\]
Replacing the first two terms in right hand side of \eqref{lag bound 2} through the above inequality, it can be simplified as
\begin{equation}\label{lag bound 3}
\begin{split}
&\mathcal{L}_n\left(\mathbf{\Phi}^{k+1}, \mathbf{\Phi}_n^{k+1},\mathbf{\Lambda}_n^{k+1}\right)\\
\geq& f_n({\mathbf{\Phi}^{k+1}})+ \frac{\rho_n-5L_n}{2}\|\mathbf{\Phi}_n^{k+1}-\mathbf{\Phi}^{k+1}\|_F^2.
\end{split}
\end{equation}
Second, since
\[
\begin{split}
\mathcal{L}(\mathbf{\Phi},\!\{\mathbf{\Phi}_n,\!\mathbf{\Lambda}_n,\!n\in \mathcal{{T}}\})
\!= \!\sum\limits_{n\in \mathcal{{T}} }\! \left(\mathcal{L}_n\left(\mathbf{\Phi}^{k+1}, \mathbf{\Phi}_n^{k+1},\mathbf{\Lambda}_n^{k+1}\right)\right),
\end{split}
\]
we can get
\begin{equation}\label{upp func bound}
\begin{split}
&\mathcal{L}(\mathbf{\Phi}^{k+1},\{\mathbf{\Phi}_n^{k+1},\mathbf{\Lambda}_n^{k+1}, n\in\mathcal{{T}}\})\\
 \geq& \sum_{n\in \mathcal{{T}}}\left(\!f_n(\mathbf{\Phi}^{k+1})+\frac{\rho_n\!-\!5L_n}{2}\|\mathbf{\Phi}_n^{k+1}\!-\!\mathbf{\Phi}^{k+1}\|_F^2\right).
\end{split}
\end{equation}
Since $\forall n\in\mathcal{{T}}, f_n(\mathbf{\Phi})\geq0$, we can conclude that, if $\rho_n>5L_n$, $\forall k$, $\mathcal{L}(\mathbf{\Phi}^{k+1},\{\mathbf{\Phi}_n^{k+1},\mathbf{\Lambda}_n^{k+1}, n\in\mathcal{{T}}\})> 0$. This completes the proof. $\hfill\blacksquare$
\end{proof}


\section{Proof of Theorem \ref{theorem admm}}\label{proof theorem 1}

First, we prove \eqref{conv variables} in Theorem \ref{theorem admm}.

In Lemmas \ref{successive difference}-\ref{lemma lower bound}, we desire
$\rho_n^3\!-\!7\rho_n^2L_n\!-\!8\rho_nL_n^2\!-\!32L_n^3>0$, $    \rho_n^3\!-\!12\rho_nL_n^2\!-\!48L_n^3>0$, and $\rho_n\geq5L_n$ hold,
where the first two inequalities can guarantee augmented Lagrangian function $\mathcal{L}(\cdot^k)$ {\it decreases sufficiently} and the last one
 can guarantee $\mathcal{L}(\cdot^k)\geq0$ in every iteration.
Through the famous Cardano formula \cite{Cardano}, we can obtain that the first two inequalities hold
when $\rho_n\geq8.41L_n$ and $\rho_n\geq4.72L_n$. Combining them with $\rho_n\geq5L_n$, we can see that when $\rho_n\geq8.41L_n$, all the inequalities hold simultaneously. To simplify the description, we choose  $\forall n \in \mathcal{{T}} ,\rho_n\geq 9L_n$, which can guarantee that
  \eqref{succ-diff phi} and \eqref{Lgeq0} in Lemma \ref{successive difference} and Lemma \ref{lemma lower bound} hold simultaneously.

 Summing both sides of the inequality \eqref{succ-diff phi} at $k=1, 2,\dotsb, +\infty$, we can obtain
 \[
\begin{split}
& \mathcal{L}(\mathbf{\Phi}^1\!, \!\{\mathbf{\Phi}_n^1,\mathbf{\Lambda}_n^1, \! n\in\mathcal{{T}}\})\!-\!\!\!\underset{k\rightarrow +\infty}\lim\mathcal{L}(\mathbf{\Phi}^{k\!+\!1},\!\{\mathbf{\Phi}_n^{k\!+\!1}\!,\!\mathbf{\Lambda}_n^{k\!+\!1}\!,\! n\in\mathcal{{T}}\})\\
&\geq \!\sum_{k=1}^{+\infty}\sum_{n\in \mathcal{{T}}}\frac{1}{2\rho_n^2}\left( \bar{c}_{n}\|{\mathbf{\Phi}_n^{k+1}}\!-\!{\mathbf{\Phi}_n^k}\|_F^2\!+\!\tilde{c}_{n} \|{\mathbf{\Phi}^{k+1}}\!-\!{\mathbf{\Phi}^{k}}\|_F^2 \right).
\end{split}
\]
 Since \eqref{Lgeq0} holds, the following inequality holds.
 \[
\begin{split}
\hspace{-0.4cm}& \ \ \mathcal{L}(\mathbf{\Phi}^1\!, \!\{\mathbf{\Phi}_n^1,\mathbf{\Lambda}_n^1, \! n\in\mathcal{{T}}\})\\
\hspace{-0.4cm}&\geq \!\sum_{k=1}^{+\infty}\sum_{n\in \mathcal{{T}}}\frac{1}{2\rho_n^2}\left( \bar{c}_{n}\|{\mathbf{\Phi}_n^{k+1}}\!-\!{\mathbf{\Phi}_n^k}\|_F^2\!+\!\tilde{c}_{n} \|{\mathbf{\Phi}^{k+1}}\!-\!{\mathbf{\Phi}^{k}}\|_F^2 \right).
\end{split}
\]

Since $\bar{c}_{n}, \tilde{c}_{n}>0$ and $\mathcal{L}(\mathbf{\Phi}^1\!, \!\{\mathbf{\Phi}_n^1,\mathbf{\Lambda}_n^1, \! n\in\mathcal{{T}}\})$ is finite, we can conclude that \eqref{convergence limit PHI} and \eqref{convergence limit PHI n} hold.
\begin{equation} \lim\limits_{k\rightarrow+\infty}\|\mathbf{\Phi}^{k+1}-\mathbf{\Phi}^{k}\|_F= 0. \label{convergence limit PHI}
\end{equation}
\begin{equation}
\lim\limits_{k\rightarrow+\infty}\|\mathbf{\Phi}_n^{k+1}-\mathbf{\Phi}_n^{k}\|_F= 0, ~\forall~ n\in \mathcal{{T}}. \label{convergence limit PHI n}
\end{equation}
Plugging \eqref{convergence limit PHI} and \eqref{convergence limit PHI n} into \eqref{diff lambda ADMM}, there exists
\begin{equation}\label{convergence Lambda}
\lim\limits_{k\rightarrow+\infty}\|\mathbf{\Lambda}_n^{k+1}\!-\!\mathbf{\Lambda}_n^{k}\|_F\! = \!0, ~\forall~ n\in \mathcal{{T}}.
\end{equation}
Plugging \eqref{convergence Lambda} into \eqref{step3 ADMM}, we further have
\begin{equation}\label{convergence diff}
\lim\limits_{k\rightarrow+\infty}\|\mathbf{\Phi}_n^{k+1}-\mathbf{\Phi}^{k+1}\|_F = 0, ~\forall~ n\in \mathcal{{T}}.
\end{equation}
Since $0\preceq\mathbf{\Phi}\prec2\pi$, \eqref{convergence limit PHI} indicates $\mathbf{\Phi}^k$ converges to some limit point as $k\rightarrow +\infty$, i.e.,
\begin{equation}\label{convergence_Phi}
  \lim\limits_{k\rightarrow+\infty}\mathbf{\Phi}^{k}=\mathbf{\Phi}^{*}.
\end{equation}
Combining the above result with \eqref{convergence limit PHI n} and \eqref{convergence diff}, we can obtain
\begin{equation}\label{convergence_Phi_n}
   \lim\limits_{k\rightarrow+\infty}\mathbf{\Phi}_n^{k}=\mathbf{\Phi}_n^{*}=\mathbf{\Phi}^{*}, \ \forall \ n\in \mathcal{{T}}.
\end{equation}
Plugging \eqref{convergence diff} into \eqref{lambda update}, we can obtain $\mathbf{\Lambda}_n^k = -\nabla f(\mathbf{\Phi}_n^k)$. Since gradient $\nabla f_n(\mathbf{\Phi})$ is Lipschtz continuous, it means $\nabla f(\mathbf{\Phi}_n^k)$ is bounded. Therefore, we conclude that $\mathbf{\Lambda}_n^k$ is also bounded. Combining this result with \eqref{convergence Lambda}, we can see that $\mathbf{\Lambda}_n^k$ can converge to some limit point, i.e.,
\begin{equation}\label{convergence_Lambda}
  \lim\limits_{k\rightarrow+\infty}\mathbf{\Lambda}_n^{k}=\mathbf{\Lambda}_n^{*}, \ \forall \ n\in \mathcal{{T}},
\end{equation}
which finish the proof for \eqref{conv variables} in Theorem \ref{theorem admm}.

Second, we prove $\mathbf{\Phi}^{*}$ is some stationary point of problem \eqref{unconstrained model}. Since $\mathbf{\Phi}^{k+1}\!=\!\underset{0\preceq\mathbf{\Phi}\prec2\pi} {\arg \min} \ \mathcal{L}\!\left(\!\mathbf{\Phi},  \{\mathbf{\Phi}_n^k,\mathbf{\Lambda}_n^k,n\!\in\!\mathcal{T}\}\!\right)$ and function $\mathcal{L}\left(\mathbf{\Phi}, \{\mathbf{\Phi}_n^k, \mathbf{\Lambda}_n^k, n\in\mathcal{{T}}\}\right)$ is quadratic function with respect to $\mathbf{\Phi}$, we have
\begin{equation*}\label{stationary}
\left\langle\nabla_{\!\mathbf{\Phi}}\mathcal{L}(\mathbf{\Phi}^{k+1}\!,\!\{\!\mathbf{\Phi}_n^{k},\!\mathbf{\Lambda}_n^{k},n\in\mathcal{{T}}\}\!),
\mathbf{\Phi}\!-\!\mathbf{\Phi}^{k+1}\right\rangle\!\geq\!0,0\preceq\mathbf{\Phi}\prec\!2\pi,
\end{equation*}
which can be further derived as
\begin{equation}\label{stationary h_ori}
\hspace{-0.05cm}\left\langle\!\!\!-\!\!\!\sum_{n\in\mathcal{T}}\!\!\left( \rho_n(\!\mathbf{\Phi}_n^k\!-\!\mathbf{\Phi}^{k+1}\!)\!+\!\mathbf{\Lambda}_n^k\right)\!,\!\mathbf{\Phi}\!-\!\mathbf{\Phi}^{k+1}\!\!\right\rangle\!\!\geq\!0,\!0\!\preceq\!\mathbf{\Phi}\!\prec\!2\pi.
\end{equation}
When $k\rightarrow+\infty$, plugging the convergence results \eqref{convergence_Phi}-\eqref{convergence_Lambda} into \eqref{stationary h_ori}, it can be simplified as
\begin{equation}\label{stationary h}
\left\langle-\!\sum_{n\in\mathcal{T}}\mathbf{\Lambda}_n^*,\mathbf{\Phi}-\mathbf{\Phi}^{*}\right\rangle\geq0,\ \ 0\preceq\mathbf{\Phi}\prec2\pi.
\end{equation}
Since $\nabla f_n({\mathbf{\Phi}}^*)= - \mathbf{\Lambda}_n^*,\ \ \forall n\in\mathcal{{T}}$, \eqref{stationary h} can be further derived as
\begin{equation}\label{stationary phi}
\left\langle\sum_{n\in\mathcal{{T}}}\nabla f_n({\mathbf{\Phi}}^*),\mathbf{\Phi}-\mathbf{\Phi}^{*}\right\rangle\geq0,\ \ 0\preceq\mathbf{\Phi}\prec2\pi.
\end{equation}
which completes the proof. $\hfill\blacksquare$

\section{Proof of Theorem \ref{theorem pdmm}}\label{proof theorem pdmm}
Let $\mathbf\Phi^*$ be the limit point when consensus-PDMM algorithm is convergent. Then, to show $\mathbf\Phi^*$ is some stationary point of problem \eqref{unconstrained model}, we prove that it should satisfy the following inequality
\begin{equation}\label{stat point}
\left\langle \sum_{n\in \mathcal{T}}\nabla f_n({\mathbf{\Phi}^{*}}),\mathbf{\Phi}-\mathbf{\Phi}^{*}\right\rangle\geq0,0\preceq\mathbf{\Phi}\prec2\pi.
\end{equation}

Since $\mathbf{{\Phi}}^{k+1} = \underset{0\preceq\mathbf{\Phi}\prec2\pi}{\arg\min}\ \ \mathcal{U}\left(\mathbf{\Phi},\{\mathbf{\Phi}_n^k,\mathbf{\Lambda}_n^k,n\in \mathcal{T}\backslash 0\}\right)$ and function $\mathcal{U}\left(\mathbf{\Phi}, \{\mathbf{\Phi}_n^k, \mathbf{\Lambda}_n^k, n\in\mathcal{{T}}\backslash 0\}\right)$ is quadratic with respect to $\mathbf{\Phi}$, we have
\begin{equation*}
\left\langle\nabla_{\mathbf{\Phi}}\mathcal{U}(\mathbf{\Phi}^{k+1},\{\mathbf{\Phi}_n^{k},\mathbf{\Lambda}_n^{k},n\in\mathcal{{T}}\backslash 0\}),
\mathbf{\Phi}-\mathbf{\Phi}^{k+1}\right\rangle\geq0,
\end{equation*}
i.e.,
\begin{equation}\label{stationary U ori}
\begin{split}
\hspace{-0.2cm}&\bigg\langle\!\!\nabla f_0\left(\mathbf{\Phi}^k\right)\!+\!L(\mathbf{\Phi}^{k\!+\!1}\!-\!\mathbf{\Phi}^k)\\
\hspace{-0.2cm}&\ -\!\!\!\sum_{n\in\mathcal{T}\backslash 0}\!\left( \rho_n(\mathbf{\Phi}^{k\!+\!1}-\mathbf{\Phi}_n^k)\!+\!\mathbf{\Lambda}_n^k\right),\mathbf{\Phi}-\mathbf{\Phi}^{k+1}\bigg\rangle\geq0,
\end{split}
\end{equation}
where $0\preceq\mathbf{\Phi}\prec2\pi$.
Since \eqref{conv variables} holds, i.e., $\lim\limits_{k\rightarrow+\infty}\mathbf{\Phi}^{k}=\mathbf{\Phi}^{*}$, $\lim\limits_{k\rightarrow+\infty}\mathbf{\Phi}_n^{k}=\mathbf{\Phi}_n^{*}$ and $\mathbf{\Phi}^{*}=\mathbf{\Phi}_n^{*}$, \eqref{stationary U ori} can be  derived as \eqref{stationary U} when $k\rightarrow +\infty$.
\begin{equation}\label{stationary U}
\left\langle\!\nabla f_0\left(\mathbf{\Phi}^*\right)-\!\!\sum_{n\in\mathcal{T}\backslash 0}\mathbf{\Lambda}_n^*,\mathbf{\Phi}-\mathbf{\Phi}^{*}\right\rangle\!\geq\!0,\  0\preceq\mathbf{\Phi}\prec2\pi.
\end{equation}
By solving problem \eqref{step2 PDMM}, we can get
\begin{equation}\label{solve pdmm b}
\begin{split}
\nabla\! f_n({\mathbf{\Phi}}_n^{k})\!+\! \mathbf{\Lambda}_n^k\!+\!L_n(\mathbf{\Phi}_n^{k+1}\!\!-\!\mathbf{\Phi}_n^{k}) \!+ \!\rho_n(\mathbf{\Phi}_n^{k+1}\!\!-\!\mathbf{\Phi}^{k})\!=\!0.
\end{split}
\end{equation}
Since $\lim\limits_{k\rightarrow+\infty}\mathbf{\Phi}^{k}=\mathbf{\Phi}^{*}$, $\lim\limits_{k\rightarrow+\infty}\mathbf{\Phi}_n^{k}=\mathbf{\Phi}_n^{*}$ and $\mathbf{\Phi}^{*}=\mathbf{\Phi}_n^{*}$, \eqref{solve pdmm b} can be rewritten as
$\nabla f_n({\mathbf{\Phi}}^*)= - \mathbf{\Lambda}_n^*,\ \ \forall n\in\mathcal{{T}}\backslash 0$.
Plugging it into \eqref{stationary U}, we can obtain \eqref{stat point}. This completes the proof. $\hfill\blacksquare$

\section{Proof of Theorem \ref{theorem local opt}}\label{proof local opt}
First, we define the following quantities
\begin{equation}
\begin{split}
\mathbf{x}&={\rm vec}(\mathbf{X})=\left[\mathbf{x}_1;\mathbf{x}_2;\cdots;\mathbf{x}_M\right],\\
\mathbf{B}_i& = \left[\mathbf{0}_{N\times(i-1)N},\mathbf{I}_{N},\mathbf{0}_{N\times(M-i)N}\right].
\end{split}
\end{equation}
Then, function $f_n(\mathbf{x})$ in problem \eqref{unconstrained model} can be expressed as
\begin{equation}
  \begin{split}
f_n(\mathbf{x})& = \!\sum_{i=1}^M\sum_{j=1}^M\!\left|\mathbf{x}^H\mathbf{B}_i^H\mathbf{S}_n\mathbf{B}_j\mathbf{x}\right|^2\!-\!M\!N^2\delta_n\\
               & = \!\sum_{i=1}^M\sum_{j=1}^M \!\left|\!{\rm vec}(\mathbf{x}\mathbf{x}^H)^H{\rm vec}(\mathbf{B}_i^H\!\mathbf{S}_n\mathbf{B}_j)\!\right|^2\!\!-\!M\!N^2\delta_n\\
               &= \!\mathbf{y}^H\mathbf{Q}_n\mathbf{y}\!-\!M\!N^2\delta_n,
  \end{split}
\end{equation}
where $\mathbf{Q}_n=\sum\limits_{i=1}^M\sum\limits_{j=1}^M{\rm vec}(\mathbf{B}_i^H\mathbf{S}_n\mathbf{B}_j){\rm vec}(\mathbf{B}_i^H\mathbf{S}_n\mathbf{B}_j)^H$ and $\mathbf{y}\!=\!{\rm vec}(\mathbf{x}\mathbf{x}^H)$. Let $\mathbf{Q}=\sum\limits_{n\in{\mathcal{T}}}\mathbf{Q}_n$, the objective function in \eqref{unconstrained model} can be rewritten as
\begin{equation}
  \begin{split}
f(\mathbf{x})=\sum\limits_{n\in{\mathcal{T}}} f_n(\mathbf{\Phi})= \mathbf{y}^H\mathbf{Q}\mathbf{y}\!-\!M\!N^2\delta_n.
  \end{split}
\end{equation}
Since $\mathbf{x}={\rm vec}(e^{j\mathbf{\Phi}}), 0\preceq\mathbf{\Phi}\prec2\pi$ and $\mathbf{Q}$ is a Hermitian matrix, following the analysis of nonconvex quartic minimization problem in \cite{Kisialiou_09}\cite{Wang_12}, we can conclude that
 any local minima $\mathbf\Phi^*$ of problem \eqref{unconstrained model} is a $\frac{1}{2}$-approximation of its global minimum, which establishes \eqref{local opt f}. $\hfill\blacksquare$

\ifCLASSOPTIONcaptionsoff
  \newpage
\fi

\end{document}